\documentclass[journal,twoside,web]{ieeecolor}
\usepackage{generic}

\usepackage{cite}

\usepackage{amssymb,amsmath,latexsym,amsfonts,amsthm,mathtools}

\usepackage[colorlinks=true]{hyperref}
\hypersetup{colorlinks, breaklinks, citecolor=blue, linkcolor=blue, urlcolor=blue}
\usepackage{graphicx}
\usepackage{float,subcaption}
\usepackage{textcomp}
\usepackage{xcolor}
\definecolor{darkpastelpurple}{rgb}{0.59, 0.44, 0.84}
\usepackage[nameinlink]{cleveref}
\Crefformat{figure}{#2Fig.~#1#3}
\Crefmultiformat{figure}{Figs.~#2#1#3}{ and~#2#1#3}{, #2#1#3}{ and~#2#1#3}

\usepackage{tikz}
\usetikzlibrary{automata, shapes, arrows, calc, arrows.meta, fit, positioning}

\theoremstyle{plain}
\newtheorem{theorem}{Theorem}
\newtheorem{lemma}{Lemma}
\newtheorem{corollary}{Corollary}
\newtheorem{proposition}{Proposition}
\newtheorem*{problem*}{Problem}
\newtheorem{problem}{Problem}
\Crefname{problem}{Problem}{Problems} 
\crefname{problem}{problem}{problems}     

\theoremstyle{remark}
\newtheorem{remark}{Remark}
\newtheorem{assumption}{Assumption}
\Crefname{assumption}{Assumption}{Assumptions} 
\crefname{assumption}{assumption}{assumptions}     

\newtheorem{definition}{Definition}
\theoremstyle{definition}

\usepackage{mathrsfs}

\usepackage{newtxmath}
\usepackage{booktabs}
\usepackage{duckuments}

\begin{document}

\title{Distributed Safe Consensus Under Asymmetric Input and Time-Varying Output Constraints}
	\author{Abhinav Sinha,~\IEEEmembership{Senior Member,~IEEE}, and Shashi Ranjan Kumar,~\IEEEmembership{Senior Member,~IEEE}
		\thanks{A. Sinha is with the Guidance, Autonomy, Learning, and Control for Intelligent Systems (GALACxIS) Lab, Department of Aerospace Engineering and Engineering Mechanics, University of Cincinnati, OH 45221, USA. (e-mail: abhinav.sinha@uc.edu). S. R. Kumar is with the Intelligent Systems and Control (ISaC) Lab, Department of Aerospace Engineering, Indian Institute of Technology Bombay, Powai, Mumbai 400076, India. (email: srk@aero.iitb.ac.in).}
	}

	\maketitle

    \begin{abstract}
This paper studies safe distributed consensus for single-integrator multi-agent systems over connected undirected graphs under simultaneous asymmetric actuator constraints and output safety constraints. Each agent is equipped with a continuously differentiable asymmetric actuator dynamics that maps a commanded control signal to the realized plant input while keeping the latter strictly inside a prescribed admissible interval. To address output safety, a barrier-coordinate transformation is introduced over a common time-varying safe interval, and a distributed synchronization law is designed in the transformed coordinates. The resulting controller integrates a graph-based coordination layer with an actuator-side tracking layer, thereby enabling simultaneous enforcement of input admissibility, forward invariance of the safe output set, and asymptotic synchronization. For compact admissible sets of initial conditions, it is shown that the closed-loop solution is complete, all signals remain bounded, the actuator inputs remain strictly within their asymmetric bounds, and the agent outputs remain inside the prescribed safe interval for all time. Moreover, the transformed synchronization errors converge exponentially to zero, and the original agent outputs asymptotically synchronize to a designer-selected admissible trajectory embedded in the common safe interval. Numerical simulations validate the proposed framework and demonstrate safe consensus under both asymmetric actuation bounds and time-varying output constraints.
\end{abstract}

\begin{IEEEkeywords}
		Multi-agent systems, safe consensus, asymmetric actuator constraints, output constraints, distributed control, constrained coordination.
	\end{IEEEkeywords}

    \section{Introduction}\label{sec:introduction}
    \IEEEPARstart{D}{istributed} consensus is crucial in networked control systems and multi-agent coordination. Over the past two decades, consensus theory \cite{OlfatiSaber2007,OlfatiSaberMurray2004,RenBeard2005,mesbahi2010graph,9072294} has developed into a mature field. Despite these advances, much of the existing literature remains idealized from the standpoint of implementation, i.e., the control channel is often assumed to be unconstrained or only implicitly constrained, while safety requirements on the agent outputs are either neglected or treated separately from the consensus objective \cite{OlfatiSaber2007,Ames2019,Wang2017}.

    In practical multi-agent systems, the control action is delivered through physical actuators with finite capabilities, and these capabilities are often asymmetric. For instance, acceleration and braking authority may differ in ground vehicles, thrust and reverse-thrust limits may differ in marine platforms, and positive and negative rate limits may be inherently unequal in aerospace and robotic systems \cite{actuator_constraints_1,actuator_constraints_2,actuator_constraints_3}. If such asymmetries are ignored during controller synthesis, the resulting distributed law may request infeasible inputs, degrade performance, or even invalidate the closed-loop guarantees. At the same time, many safety-critical tasks require each agent to evolve within a prescribed admissible output interval or corridor, either because of operational envelopes, mission-level safety requirements, or environment-induced restrictions \cite{Tee2009,safe_coordination_2,ranjan2026engagement,Wang2017,9149651,zhang2025tcns_state_constraints}. Thus, a practically meaningful safe-consensus design must enforce both input admissibility and output safety while preserving distributed coordination.

    Existing results address important parts of this problem, but not the complete unified treatment considered here. Consensus algorithms with saturated or bounded inputs have established valuable coordination guarantees, yet they often rely on symmetric saturation descriptions, nonsmooth clipping operators, or input constraints that are not embedded directly into the plant-actuator dynamics \cite{8950207,10795196,Yang2014,Yi2019,zhang2023neural,SAJJADIKIA2013762,10982103}. On the other hand, barrier-based and safety-critical multi-agent methods provide powerful tools for set invariance and collision avoidance, but actuator admissibility is often treated implicitly, through external filtering, or as an auxiliary layer appended to a nominal controller \cite{sharifi2023tcns_hocbf_leader_follower,song2025tcns_safety_flocking,zhang2025tcns_state_constraints,10794772,Ames2019,Wang2017,10556645,zhang2023neural,11499442}. Consequently, a unified distributed design that simultaneously guarantees strict asymmetric actuator admissibility, time-varying output safety, and asymptotic synchronization remains an open and practically relevant problem.

    Motivated by this gap, this paper develops a distributed safe-consensus framework for single-integrator multi-agent systems over connected undirected graphs under simultaneous asymmetric input constraints and time-varying output constraints. The proposed architecture blends an actuator-side tracking layer with a graph-based coordination law. For output safety, a logarithmic barrier-coordinate transformation is introduced over a common admissible core interval. In the transformed coordinates, a distributed synchronization law with a pinning term drives the agents toward a prescribed admissible reference trajectory embedded inside the safe interval. The proposed design treats actuator feasibility and output safety as integral components of the distributed controller architecture and simultaneously achieves: \emph{(i)} strict enforcement of agent-wise asymmetric actuator bounds through a continuously differentiable asymmetric actuator dynamics, \emph{(ii)} rigorous forward invariance of a common time-varying safe interval, and \emph{(iii)} asymptotic synchronization to a designer-selected admissible trajectory using only local graph interactions. 

    The proposed approach is also distinct from funnel-control and prescribed-performance methods \cite{10794772,10982103}. Funnel-control designs typically constrain tracking or formation errors to remain within prescribed transient envelopes, thereby shaping the evolution of an error variable. However, the present work imposes safety directly on the agent outputs through a common time-varying admissible core and then performs distributed synchronization in the corresponding barrier coordinates. Moreover, the proposed design assigns a designer-selected admissible trajectory and proves asymptotic synchronization to this trajectory, rather than only maintaining an error inside a prescribed funnel. Most importantly, the present framework treats asymmetric actuator limits as part of the plant-actuator dynamics and proves the strict admissibility of the realized inputs together with the forward invariance of the output safe set. Thus, input feasibility, output safety, and distributed synchronization are addressed in a single closed-loop analysis.  The main contributions of this paper are as follows. 
    \begin{itemize}
        \item We formulate a distributed consensus problem for single-integrator multi-agent systems under simultaneous asymmetric input constraints and time-varying output safety constraints, where the actuator bounds are embedded directly into the agent dynamics through a continuously differentiable asymmetric actuator dynamics.
        \item We develop a distributed controller for the input-constrained case that combines a graph-based nominal consensus law with an actuator-tracking layer and guarantees completeness of solutions, boundedness of all closed-loop signals, strict satisfaction of asymmetric actuator bounds, and asymptotic consensus on compact admissible initial sets.
        \item For the safe-consensus problem, we introduce a barrier-coordinate distributed synchronization law over a common time-varying admissible interval and establish forward invariance of the safe set, strict input admissibility, exponential convergence of the transformed synchronization error, and asymptotic synchronization of the original agent outputs. We show that incorporating a pinning term in the transformed coordinates allows the proposed design to achieve synchronization to a designer-selected admissible trajectory inside the common safe interval, which is stronger than consensus to an unspecified limit.
        \item The analysis is developed in a self-contained semiglobal framework that avoids auxiliary \emph{a priori} boundedness assumptions and yields explicit gain and interiority conditions ensuring well-posedness and safety of the closed-loop system.
    \end{itemize}

    \section{Problem Formulation}\label{sec:problem_formulation}
Let $\mathbb{R}$, $\mathbb{R}_{>0}$, and $\mathbb{R}_{\ge 0}$ denote the sets of real, positive real, and nonnegative real numbers, respectively. For a vector $\mathbf{x} \in \mathbb{R}^{N}$, $\|\mathbf{x}\|$ denotes the Euclidean norm. The symbols $\mathbf{1}_{N} \in \mathbb{R}^{N}$ and $\mathbf{0}_{N} \in \mathbb{R}^{N}$ denote the vectors of all ones and all zeros, respectively, while $I_{N}$ denotes the $N \times N$ identity matrix. For scalars $a_{i}$, $i \in \{1,\dots,N\}$, $\mathrm{col}\{a_{1},\dots,a_{N}\}$ denotes the stacked column vector and $\mathrm{diag}\{a_{1},\dots,a_{N}\}$ denotes the corresponding diagonal matrix.

Consider a weighted undirected graph $\mathcal{G} = \left(\mathcal{V},\mathcal{E},A\right)$, where $\mathcal{V} = \{1,2,\dots,N\}$ is the node set, $\mathcal{E} \subseteq \mathcal{V} \times \mathcal{V}$ is the edge set, and $A = [a_{ij}] \in \mathbb{R}^{N \times N}$ is the adjacency matrix satisfying $a_{ij} = a_{ji} \ge 0,~a_{ii} = 0$, and $a_{ij} > 0 \iff (i,j) \in \mathcal{E}$. The neighbor set of agent $i$ is defined by $\mathcal{N}_{i} = \{j \in \mathcal{V} : (i,j) \in \mathcal{E}\}$. The degree matrix and Laplacian matrix associated with $\mathcal{G}$ are defined as $D = \mathrm{diag}\{d_{1},\dots,d_{N}\}$ with $d_{i} = \sum_{j=1}^{N} a_{ij}$ and $L=D-A$. Throughout the paper, the following standing assumption is imposed.
\begin{assumption}\label{ass:connected_graph}
The graph $\mathcal{G}$ is undirected and connected.
\end{assumption}
Under \Cref{ass:connected_graph}, the Laplacian satisfies $L = L^{\top} \ge 0$, $L \mathbf{1}_{N} = \mathbf{0}_{N}$, and its eigenvalues can be ordered as $0 = \lambda_{1}(L) < \lambda_{2}(L) \le \cdots \le \lambda_{N}(L)$. Define the consensus subspace $\mathcal{C} := \mathrm{span}\{\mathbf{1}_{N}\}
    = \{\mathbf{x} \in \mathbb{R}^{N} : x_{1} = \cdots = x_{N}\}$, and the orthogonal projection onto $\mathcal{C}^{\perp}$ by $\Pi := I_{N} - \frac{1}{N}\mathbf{1}_{N}\mathbf{1}_{N}^{\top}$. For any stacked state vector $\mathbf{x} = \mathrm{col}\{x_{1},\dots,x_{N}\}$, the disagreement vector is defined as $\tilde{\mathbf{x}} := \Pi \mathbf{x}$. Clearly, $\tilde{\mathbf{x}} = \mathbf{0}_{N}$ if and only if $\mathbf{x} \in \mathcal{C}$.

Consider a network of $N$ single-integrator agents whose plant dynamics are given by
\begin{align}
    \dot{x}_{i} = u_{i}, ~~ y_{i} = x_{i}, ~~ i \in \mathcal{V}, \label{eq:agent_dynamics}
\end{align}
where $x_{i} \in \mathbb{R}$ is the state of agent $i$, $y_{i} \in \mathbb{R}$ is the measured output\footnote{Since $y_i=x_i$ in \eqref{eq:agent_dynamics}, output constraints and state constraints are identical in the present setting. Therefore, these terms are used interchangeably whenever no confusion arises.}, and $u_{i} \in \mathbb{R}$ is the actual input delivered to the plant.

To explicitly account for actuator limitations \cite{Kumar2025}, we distinguish between the commanded signal $v_{i}$ generated by the controller and the actual plant input $u_{i}$. For each agent $i$, the actuator is modeled by an asymmetric actuator dynamics whose regularization factor is continuously
differentiable on the admissible input interval and vanishes only at the
actuator boundaries
\begin{align}
    \dot{u}_{i}
    =
    p_{1,i}\,\sigma_{i}(u_{i})\,v_{i}
    -
 p_{2,i} u_{i}, \label{eq:actuator_dynamics}
\end{align}
where $p_{1,i} \in \mathbb{R}_{>0}$, $p_{2,i} \in \mathbb{R}_{>0}$, and
\begin{align}
    \sigma_{i}(u_{i})
    :=
    \varrho_{i}
    \left(
        1 - \left(\frac{u_{i}}{\overline{u}_{i}}\right)^{\gamma_{i}}
    \right)
    +
    (1-\varrho_{i})
    \left(
        1 - \left(\frac{u_{i}}{\underline{u}_{i}}\right)^{\gamma_{i}}
    \right), \label{eq:sigma_def}
\end{align}
with $\varrho_{i}
    :=1$ if $u_{i}>0$ and $0$ otherwise, and $\gamma_{i} = 2n_i~\forall~n_i\in\mathbb{N}$. Here, $\underline{u}_{i} < 0 < \overline{u}_{i}$ denote the lower and upper admissible actuator limits of agent $i$.
\begin{remark}\label{rem:asymmetric_input_limits}
In general, the actuator bounds are asymmetric, that is, $|\underline{u}_i|\neq |\overline{u}_i|$. Therefore, the available control authority in the positive and negative directions need not be identical.
\end{remark}
For each agent $i$, define the admissible input set $\mathcal{U}_{i}
    :=
    \{u \in \mathbb{R} : \underline{u}_{i} < u < \overline{u}_{i}\}$, and the network-level admissible input set $\mathcal{U}
    :=
    \mathcal{U}_{1} \times \cdots \times \mathcal{U}_{N}$. The stacked plant and actuator states are denoted by $\mathbf{x} := \mathrm{col}\{x_{1},\dots,x_{N}\}$, $\mathbf{u} := \mathrm{col}\{u_{1},\dots,u_{N}\}$, and $\mathbf{v} := \mathrm{col}\{v_{1},\dots,v_{N}\}$. The following assumption ensures that the actuator dynamics are well posed and that the initial actuator states are admissible.
\begin{assumption}\label{ass:actuator}
For each $i \in \mathcal{V}$, the constants $p_{1,i}$, $p_{2,i}$, $\underline{u}_{i}$, $\overline{u}_{i}$, and $\gamma_{i}$ satisfy 
    $p_{1,i} > 0,~ p_{2,i} > 0, ~ \underline{u}_{i} < 0 < \overline{u}_{i},~ \gamma_{i} = 2n_i$,
and the initial actuator state satisfies $u_{i}(0) \in \mathcal{U}_{i}$.
\end{assumption}
In addition to actuator constraints, each agent is required to satisfy a possibly time-varying output constraint of the form $x_{i}(t) \in \mathcal{X}_{i}(t), ~ \forall t \ge 0$, where    $\mathcal{X}_{i}(t)
    :=
    \{x \in \mathbb{R} : \underline{x}_{i}(t) < x < \overline{x}_{i}(t)\}$,
and $\underline{x}_{i}(\cdot)$ and $\overline{x}_{i}(\cdot)$ denote the lower and upper safety bounds associated with agent $i$. Define the admissible network state set at time $t$ as $\mathcal{X}(t)
    :=
    \mathcal{X}_{1}(t) \times \cdots \times \mathcal{X}_{N}(t)$. Since consensus requires all states to asymptotically agree, a necessary feasibility condition is that the individual admissible intervals admit a nonempty common intersection. To this end, define the consensus-feasible set $\mathcal{X}_{\cap}(t)
    :=
    \bigcap_{i=1}^{N} \mathcal{X}_{i}(t)
    =
    \left(
        \max_{1 \le i \le N} \underline{x}_{i}(t),
        \min_{1 \le i \le N} \overline{x}_{i}(t)
    \right)$.
The following assumption is adopted for the output-constrained consensus problem.
\begin{assumption}\label{ass:output_constraints}
For each $i \in \mathcal{V}$, the functions $\underline{x}_{i} : \mathbb{R}_{\ge 0} \to \mathbb{R}$ and $\overline{x}_{i} : \mathbb{R}_{\ge 0} \to \mathbb{R}$ are continuously differentiable and satisfy $\underline{x}_{i}(t) < \overline{x}_{i}(t), ~ \forall t \ge 0$. Moreover, there exists a constant $\delta_{x} > 0$ such that
    $\min_{1 \le i \le N} \overline{x}_{i}(t)
    -
    \max_{1 \le i \le N} \underline{x}_{i}(t)
    \ge \delta_{x}, ~~ \forall t \ge 0$,
that is, $\mathcal{X}_{\cap}(t) \neq \emptyset$ for all $t \ge 0$.
\end{assumption}
\Cref{ass:output_constraints} is necessary for the safe consensus problem to be meaningful, since asymptotic agreement is impossible if the admissible intervals do not share a common feasible region. We consider dynamic distributed controllers of the form
\begin{align}
    \dot{\eta}_{i}
    &=
    \Phi_{i}
    \left(
        \eta_{i},
        x_{i},
        u_{i},
        \{x_{j} - x_{i}\}_{j \in \mathcal{N}_{i}},
        \{u_{j} - u_{i}\}_{j \in \mathcal{N}_{i}},
        t
    \right), \label{eq:controller_state} \\
    v_{i}
    &=
    \Psi_{i}
    \left(
        \eta_{i},
        x_{i},
        u_{i},
        \{x_{j} - x_{i}\}_{j \in \mathcal{N}_{i}},
        \{u_{j} - u_{i}\}_{j \in \mathcal{N}_{i}},
        t
    \right), \label{eq:controller_output}
\end{align}
where $\eta_{i} \in \mathbb{R}^{q_{i}}$ is the controller state of agent $i$, and the mappings $\Phi_{i}$ and $\Psi_{i}$ are locally Lipschitz in their state arguments and piecewise continuous in time. The stacked controller state is denoted by $\boldsymbol{\eta}
    :=
    \mathrm{col}\{\eta_{1},\dots,\eta_{N}\}$, with total dimension $q := \sum_{i=1}^{N} q_{i}$. The controller class \eqref{eq:controller_state}--\eqref{eq:controller_output} formalizes the distributed-information constraint that each agent may only use its own variables and relative information available from neighboring agents.
\begin{definition}\label{def:complete_solution}
A solution of the closed-loop system consisting of \eqref{eq:agent_dynamics}, \eqref{eq:actuator_dynamics}, and \eqref{eq:controller_state}--\eqref{eq:controller_output} is said to be complete if it is defined on the entire interval $[0,\infty)$.
\end{definition}
\begin{definition}\label{def:consensus}
The closed-loop multi-agent system is said to achieve asymptotic consensus if every complete solution satisfies $\lim_{t \to \infty} \|\Pi \mathbf{x}(t)\| = 0$.
Equivalently, $\lim_{t \to \infty} |x_{i}(t) - x_{j}(t)| = 0,
    ~ \forall i,j \in \mathcal{V}$.
\end{definition}
\begin{definition}\label{def:safe_consensus}
The closed-loop multi-agent system is said to achieve safe consensus if it achieves asymptotic consensus and, in addition, satisfies $\mathbf{u}(t) \in \mathcal{U}, ~ \mathbf{x}(t) \in \mathcal{X}(t), ~ \forall t \ge 0$.
\end{definition}
We now formulate the two problems addressed in this paper.
\begin{problem}[Distributed consensus under asymmetric bounded inputs]\label{prob:input_constrained_consensus}
Consider the networked system \eqref{eq:agent_dynamics}--\eqref{eq:actuator_dynamics} over the graph $\mathcal{G}$ satisfying \Cref{ass:connected_graph,ass:actuator}. Design a distributed controller of the form \eqref{eq:controller_state}--\eqref{eq:controller_output} such that, for every compact set $\mathcal{K} \subset \mathbb{R}^{N} \times \mathcal{U} \times \mathbb{R}^{q}$, there exist controller parameters for which, for all initial conditions $\left(\mathbf{x}(0),\mathbf{u}(0),\boldsymbol{\eta}(0)\right) \in \mathcal{K}$, the following properties hold: (i) the corresponding closed-loop solution is complete; (ii) the input constraints are satisfied for all time, namely $\mathbf{u}(t) \in \mathcal{U}, ~\forall t \ge 0$; (iii) the network achieves asymptotic consensus, that is, $\lim_{t \to \infty} \|\Pi \mathbf{x}(t)\| = 0$; and (iv) all closed-loop signals remain bounded on $[0,\infty)$. When the above properties hold for all admissible initial conditions, the closed-loop system is said to achieve \emph{semiglobal consensus under asymmetric bounded inputs}.
\end{problem}
\begin{problem}[Safe distributed consensus under simultaneous input and output constraints]\label{prob:safe_consensus}
Consider the networked system \eqref{eq:agent_dynamics}--\eqref{eq:actuator_dynamics} over the graph $\mathcal{G}$ satisfying \Crefrange{ass:connected_graph}{ass:output_constraints}. Design a distributed controller of the form \eqref{eq:controller_state}--\eqref{eq:controller_output} such that, for every compact set
    $\mathcal{K}
    \subset
    \mathcal{X}(0) \times \mathcal{U} \times \mathbb{R}^{q}$,
there exist controller parameters for which, for all initial conditions
$\left(\mathbf{x}(0),\mathbf{u}(0),\boldsymbol{\eta}(0)\right) \in \mathcal{K}$,
the following properties hold: (i) the corresponding closed-loop solution is complete; (ii) the input constraints are satisfied for all time, namely $\mathbf{u}(t) \in \mathcal{U}, ~ \forall t \ge 0$; (iii) the output constraints are satisfied for all time, namely $\mathbf{x}(t) \in \mathcal{X}(t), ~ \forall t \ge 0$; (iv) the network achieves asymptotic consensus, that is, $\lim_{t \to \infty} \|\Pi \mathbf{x}(t)\| = 0$; and (v) all closed-loop signals remain bounded on $[0,\infty)$. When the above properties hold for every compact admissible initial set $\mathcal{K}$, the closed-loop system is said to achieve \emph{semiglobal safe consensus under simultaneous input and output constraints}.
\end{problem}
\begin{remark}\label{rem:consensus_not_average}
\Cref{prob:input_constrained_consensus,prob:safe_consensus} concern asymptotic agreement, not necessarily average consensus. Since actuator dynamics \eqref{eq:actuator_dynamics} introduces additional internal dynamics between the commanded signal $v_{i}$ and the realized input $u_{i}$, preservation of the arithmetic average of the initial states is not imposed a priori.
\end{remark}
\begin{remark}\label{rem:semiglobal}
The safe consensus objective in \Cref{prob:safe_consensus} is posed in a semiglobal sense because the state constraints define an open admissible set, and forward invariance must be guaranteed simultaneously with convergence of the network disagreement dynamics.
\end{remark}

\section{Main Results}\label{sec:main_results}
For each agent $i \in \mathcal{V}$, define the nominal consensus input
\begin{align}
    \alpha_{i}(\mathbf{x})
    :=
    -k \sum_{j=1}^{N} a_{ij}(x_{i}-x_{j}),
    ~~
    k \in \mathbb{R}_{>0}, \label{eq:alpha_i_main}
\end{align}
and the actuator-tracking error
\begin{align}
    \varepsilon_{i}
    :=
    u_{i} - \alpha_{i}(\mathbf{x}). \label{eq:epsilon_i_main}
\end{align}
The distributed commanded input is chosen as
\begin{align}
    v_{i}
    =
    \frac{
        p_{2,i}u_{i}
        +
        \dot{\alpha}_{i}(\mathbf{x},\mathbf{u})
        -
        c_{i}\varepsilon_{i}
    }{
        p_{1,i}\sigma_{i}(u_{i})
    },
    ~~
    c_{i} \in \mathbb{R}_{>0}, \label{eq:v_i_main}
\end{align}
where the term $\dot{\alpha}_{i}(\mathbf{x},\mathbf{u})$ is defined as
\begin{align}
    \dot{\alpha}_{i}(\mathbf{x},\mathbf{u})
    =
    -k \sum_{j=1}^{N} a_{ij}(u_{i}-u_{j}). \label{eq:alpha_dot_i_main}
\end{align}
\begin{proposition}\label{prop:reduced_input_dynamics}
Consider the closed-loop system consisting of \eqref{eq:agent_dynamics}, \eqref{eq:actuator_dynamics}, and \eqref{eq:v_i_main}. Let $ \boldsymbol{\varepsilon}
    :=
    \mathrm{col}\{\varepsilon_{1},\dots,\varepsilon_{N}\},
    ~
    \tilde{\mathbf{x}}
    :=
    \Pi \mathbf{x}$. Then, on every interval on which the closed-loop solution is well defined and satisfies $\mathbf{u}(t) \in \mathcal{U}$, one has the following dynamics
\begin{align}
    \dot{\varepsilon}_{i}
    =
    -c_{i}\varepsilon_{i},
    ~~
    i \in \mathcal{V}, \label{eq:eps_dynamics_main}
\end{align}
and
\begin{align}
    \dot{\tilde{\mathbf{x}}}
    =
    -kL\tilde{\mathbf{x}} + \Pi\boldsymbol{\varepsilon}. \label{eq:tilde_x_dynamics_main}
\end{align}
\end{proposition}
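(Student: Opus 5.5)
The argument is a direct computation on the interval where the solution exists and $\mathbf{u}(t)\in\mathcal{U}$. The first thing to establish is that \eqref{eq:v_i_main} is well defined there. For $u_i\in(\underline{u}_i,\overline{u}_i)$ we have $\sigma_i(u_i)>0$: if $u_i>0$ then $\sigma_i=1-(u_i/\overline{u}_i)^{\gamma_i}$ with $0<u_i/\overline{u}_i<1$, and if $u_i\le 0$ then $\sigma_i=1-(u_i/\underline{u}_i)^{\gamma_i}$ with $0\le u_i/\underline{u}_i<1$. In both cases the even exponent $\gamma_i$ gives a value in $(0,1]$. Hence $p_{1,i}\sigma_i(u_i)\neq 0$, so the division in \eqref{eq:v_i_main} is legitimate and all subsequent algebra is valid pointwise in time.

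Next I substitute \eqref{eq:v_i_main} into \eqref{eq:actuator_dynamics}. The factor $p_{1,i}\sigma_i(u_i)$ cancels, leaving $\dot u_i = p_{2,i}u_i+\dot\alpha_i(\mathbf{x},\mathbf{u})-c_i\varepsilon_i-p_{2,i}u_i=\dot\alpha_i(\mathbf{x},\mathbf{u})-c_i\varepsilon_i$. I then check that the expression \eqref{eq:alpha_dot_i_main} really is the time derivative of $\alpha_i(\mathbf{x}(t))$ along solutions. Since $\alpha_i$ is linear in $\mathbf{x}$ and $\dot x_j=u_j$ by \eqref{eq:agent_dynamics}, we get $\frac{d}{dt}\alpha_i=-k\sum_j a_{ij}(\dot x_i-\dot x_j)=-k\sum_j a_{ij}(u_i-u_j)$, which matches \eqref{eq:alpha_dot_i_main}. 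Differentiating \eqref{eq:epsilon_i_main} therefore gives $\dot\varepsilon_i=\dot u_i-\dot\alpha_i=-c_i\varepsilon_i$, which is \eqref{eq:eps_dynamics_main}. The only real subtlety in the whole argument is this step: verifying that the state-dependent expression $\dot\alpha_i(\mathbf{x},\mathbf{u})$ used in the controller coincides with the true derivative. That identification requires the plant equation, not just the controller formula.

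For \eqref{eq:tilde_x_dynamics_main}, I write $\boldsymbol{\alpha}(\mathbf{x})=-kL\mathbf{x}$ in stacked form, so $\mathbf{u}=-kL\mathbf{x}+\boldsymbol{\varepsilon}$ and $\dot{\mathbf{x}}=-kL\mathbf{x}+\boldsymbol{\varepsilon}$. Applying the constant matrix $\Pi$ gives $\dot{\tilde{\mathbf{x}}}=-k\Pi L\mathbf{x}+\Pi\boldsymbol{\varepsilon}$. It remains to show $\Pi L\mathbf{x}=L\tilde{\mathbf{x}}$. Because $L$ is symmetric with $L\mathbf{1}_N=\mathbf{0}_N$, we also have $\mathbf{1}_N^\top L=\mathbf{0}_N^\top$. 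Hence $\Pi L=L-\frac1N\mathbf{1}_N\mathbf{1}_N^\top L=L$ and $L\Pi=L-\frac1N L\mathbf{1}_N\mathbf{1}_N^\top=L$. So $\Pi L\mathbf{x}=L\mathbf{x}=L\Pi\mathbf{x}=L\tilde{\mathbf{x}}$, which yields \eqref{eq:tilde_x_dynamics_main}. All of these steps are routine once well-posedness (positivity of $\sigma_i$ on $\mathcal{U}_i$) is settled.
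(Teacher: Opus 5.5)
Your proposal is correct and follows the paper's own proof: substitute the commanded input into the actuator dynamics so that $p_{1,i}\sigma_i(u_i)$ cancels and gives $\dot{\varepsilon}_i=-c_i\varepsilon_i$, then stack $\dot{\mathbf{x}}=-kL\mathbf{x}+\boldsymbol{\varepsilon}$ and apply $\Pi$ using $\Pi L=L\Pi=L$. Your additional checks make explicit what the paper uses implicitly: positivity of $\sigma_i$ on $\mathcal{U}_i$, the fact that the controller's formula for $\dot{\alpha}_i$ equals the true time derivative along the plant, and the derivation of $\Pi L=L\Pi=L$.
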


\begin{proof}
Differentiating \eqref{eq:epsilon_i_main} with respect to time and using \eqref{eq:actuator_dynamics} together with \eqref{eq:v_i_main} gives
\begin{align}
    \dot{\varepsilon}_{i}
       &=
    p_{1,i}\sigma_{i}(u_{i})v_{i}
    -
    p_{2,i}u_{i}
    -
    \dot{\alpha}_{i} \nonumber\\
    &=
    p_{1,i}\sigma_{i}(u_{i})
    \frac{
        p_{2,i}u_{i}
        +
        \dot{\alpha}_{i}
        -
        c_{i}\varepsilon_{i}
    }{
        p_{1,i}\sigma_{i}(u_{i})
    }
    -
    p_{2,i}u_{i}
    -
    \dot{\alpha}_{i} ,
\end{align}
which proves \eqref{eq:eps_dynamics_main}. Since $\dot{x}_{i}=u_{i}=\alpha_{i}+\varepsilon_{i}$, we also have
\begin{align}
    \dot{\mathbf{x}}
    =
    -kL\mathbf{x} + \boldsymbol{\varepsilon}.
\end{align}
Applying $\Pi$ and using $\Pi L = L\Pi = L$ yields \eqref{eq:tilde_x_dynamics_main}.
\end{proof}
\begin{theorem}\label{thm:input_consensus_main}
Suppose Assumptions~\ref{ass:connected_graph} and \ref{ass:actuator} hold. Let $ \mathcal{K}
    \subset
    \mathbb{R}^{N} \times \mathcal{U}$ be a compact set of initial conditions. For each $(\mathbf{x}_{0},\mathbf{u}_{0}) \in \mathcal{K}$, define
\begin{align}
    \boldsymbol{\varepsilon}_{0}
    :=
    \mathbf{u}_{0} + kL\mathbf{x}_{0}, \label{eq:eps0_input_main}
\end{align}
and introduce the compact-set radius $R_{\mathcal{K}}
    :=
    \sup_{(\mathbf{x}_{0},\mathbf{u}_{0}) \in \mathcal{K}}
    \left(
        \|\Pi \mathbf{x}_{0}\|^{2}
        +
        \|\boldsymbol{\varepsilon}_{0}\|^{2}
    \right)^{\frac{1}{2}}$. For each agent $i \in \mathcal{V}$, let $\bar{r}_{i}
    :=
    \min\{\overline{u}_{i},-\underline{u}_{i}\}$, and define $r_{i,\mathcal{K}}
    :=
    (2kd_{i}+1)R_{\mathcal{K}}$. Assume that $c_{\min}
    :=
    \min_{1 \le i \le N} c_{i}
    >
    \frac{1}{2k\lambda_{2}(L)}$, and
\begin{align}
    r_{i,\mathcal{K}} < \bar{r}_{i},
    ~~
    \forall i \in \mathcal{V}. \label{eq:interiority_condition_input_main}
\end{align}
Then, for every initial condition $(\mathbf{x}(0),\mathbf{u}(0)) \in \mathcal{K}$, the closed-loop system consisting of \eqref{eq:agent_dynamics}, \eqref{eq:actuator_dynamics}, and \eqref{eq:v_i_main} admits a unique complete solution satisfying the following properties:
\begin{enumerate}
    \item the actuator inputs remain strictly admissible for all time, i.e.,
    \begin{align}
        u_{i}(t) \in \mathcal{U}_{i},
        ~~
        \forall t \ge 0, \ \forall i \in \mathcal{V}; \label{eq:input_admissibility_input_main}
    \end{align}
    \item the actuator-tracking errors satisfy, i.e.,
    \begin{align}
        \varepsilon_{i}(t)
        =
        e^{-c_{i}t}\varepsilon_{i}(0),
        ~~
        \forall t \ge 0, \ \forall i \in \mathcal{V}; \label{eq:eps_explicit_input_main}
    \end{align}
    \item the disagreement and actuator-tracking errors satisfy the estimate, i.e.,
    \begin{align}
        \|\tilde{\mathbf{x}}(t)\|^{2}
        +
        \|\boldsymbol{\varepsilon}(t)\|^{2}
        \le
        e^{-\eta_{\mathrm{c}} t}
        \left(
            \|\tilde{\mathbf{x}}(0)\|^{2}
            +
            \|\boldsymbol{\varepsilon}(0)\|^{2}
        \right),
        ~
        \forall t \ge 0, \label{eq:joint_decay_input_main}
    \end{align}
    where the term $\eta_{\mathrm{c}}$ is defined as 
    \begin{align}
        \eta_{\mathrm{c}}
        :=
        \min
        \left\{
            k\lambda_{2}(L),
            \,
            2c_{\min} - \frac{1}{k\lambda_{2}(L)}
        \right\}
        >
        0; \label{eq:eta_c_input_main}
    \end{align}
    \item there exists a finite constant $x_{\infty} \in \mathbb{R}$ such that
    \begin{align}
        \lim_{t \to \infty} x_{i}(t) = x_{\infty},
        ~~
        \forall i \in \mathcal{V}; \label{eq:consensus_limit_input_main}
    \end{align}
    \item the commanded inputs $v_{i}$ are bounded on $[0,\infty)$, and all closed-loop signals remain bounded.
\end{enumerate}
Consequently, the controller \eqref{eq:v_i_main} solves \Cref{prob:input_constrained_consensus} on the compact admissible set $\mathcal{K}$.
\end{theorem}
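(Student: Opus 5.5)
The proof is a local-existence-plus-continuation argument, with the estimates of \Cref{prop:reduced_input_dynamics} used to show that the solution never reaches the actuator boundary.

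\textbf{Local solution.} The closed-loop vector field is locally Lipschitz on the open set $\mathbb{R}^{N}\times\mathcal{U}$. There $\sigma_i(u_i)>0$, and $\sigma_i$ is $C^1$ because $\gamma_i=2n_i$ makes both branches agree at $u_i=0$ with zero derivative. Hence a unique maximal solution exists on some $[0,T_{\max})$. As long as it stays in $\mathbb{R}^N\times\mathcal{U}$, \Cref{prop:reduced_input_dynamics} applies on $[0,T_{\max})$. Integrating \eqref{eq:eps_dynamics_main} gives \eqref{eq:eps_explicit_input_main}. Since $\alpha_i(\mathbf{x}_0)=-k(L\mathbf{x}_0)_i$, the initial value is $\boldsymbol{\varepsilon}(0)=\boldsymbol{\varepsilon}_0$ as in \eqref{eq:eps0_input_main}.

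\textbf{Decay estimate.} Take $V=\tfrac12\|\tilde{\mathbf{x}}\|^2+\tfrac12\|\boldsymbol{\varepsilon}\|^2$. Because $\tilde{\mathbf{x}}\perp\mathbf{1}_N$, we have $\tilde{\mathbf{x}}^\top L\tilde{\mathbf{x}}\ge\lambda_2\|\tilde{\mathbf{x}}\|^2$. Applying Young's inequality to the cross term, $\tilde{\mathbf{x}}^\top\Pi\boldsymbol{\varepsilon}\le \tfrac{k\lambda_2}{2}\|\tilde{\mathbf{x}}\|^2+\tfrac{1}{2k\lambda_2}\|\boldsymbol{\varepsilon}\|^2$, yields
\[
\dot V\le -\tfrac{k\lambda_2}{2}\|\tilde{\mathbf{x}}\|^2-\Bigl(c_{\min}-\tfrac{1}{2k\lambda_2}\Bigr)\|\boldsymbol{\varepsilon}\|^2\le -\eta_{\mathrm c}V .
\]
Grönwall's inequality then gives \eqref{eq:joint_decay_input_main} on $[0,T_{\max})$. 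The gain condition $c_{\min}>1/(2k\lambda_2)$ is exactly what makes $\eta_{\mathrm c}>0$.

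\textbf{Interiority (main step).} Write $u_i=\alpha_i+\varepsilon_i=-k\sum_j a_{ij}(\tilde x_i-\tilde x_j)+\varepsilon_i$; the projection does not change differences. Then
\[
|u_i|\le k d_i(|\tilde x_i|)+k\sum_j a_{ij}|\tilde x_j|+|\varepsilon_i|\le (2kd_i+1)\sqrt{\|\tilde{\mathbf{x}}\|^2+\|\boldsymbol{\varepsilon}\|^2}\le r_{i,\mathcal{K}},
\]
using the decay estimate and the definition of $R_{\mathcal K}$. By \eqref{eq:interiority_condition_input_main}, $|u_i(t)|\le r_{i,\mathcal K}<\bar r_i$ on $[0,T_{\max})$, so $\mathbf{u}$ stays in a compact subset of $\mathcal{U}$ and $\sigma_i(u_i)$ is bounded below by a positive constant. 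This must be combined with a contradiction argument. Let $T^\ast$ be the first exit time from $\{|u_i|\le r'\}$ for some $r'\in(r_{i,\mathcal K},\bar r_i)$. On $[0,T^\ast]$ the reduced dynamics hold, so the bound above applies and excludes the exit. The delicate point is making sure \Cref{prop:reduced_input_dynamics} is only used where it is valid.

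\textbf{Completeness, boundedness and the consensus limit.} $\mathbf{u}$ is bounded and $\dot{\mathbf{x}}=\mathbf{u}$, so $\mathbf{x}$ grows at most linearly and cannot blow up in finite time. The solution therefore stays in a compact subset of the domain on each finite interval, forcing $T_{\max}=\infty$; this proves completeness and \eqref{eq:input_admissibility_input_main}. For the consensus limit, the average $\bar x=\tfrac1N\mathbf{1}^\top\mathbf{x}$ satisfies $\dot{\bar x}=\tfrac1N\mathbf{1}^\top\boldsymbol{\varepsilon}$, because $\mathbf{1}^\top L=0$. This derivative decays exponentially, so $\bar x(t)\to x_\infty:=\bar x(0)+\tfrac1N\sum_i\varepsilon_i(0)/c_i$, which is finite. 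Together with $\tilde{\mathbf{x}}\to0$ this gives \eqref{eq:consensus_limit_input_main}, and it also shows $\mathbf{x}$ is bounded.

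Finally, $v_i$ in \eqref{eq:v_i_main} has a bounded numerator, since $\mathbf{u}$, $\dot\alpha_i$ and $\varepsilon_i$ are bounded. Its denominator satisfies $p_{1,i}\sigma_i(u_i)\ge p_{1,i}\min\{1-(r_{i,\mathcal K}/\overline u_i)^{\gamma_i},\,1-(r_{i,\mathcal K}/\underline u_i)^{\gamma_i}\}>0$. Hence $v_i$ is bounded and all signals are bounded, which solves \Cref{prob:input_constrained_consensus}.
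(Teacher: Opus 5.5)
Your proposal is correct and follows essentially the same route as the paper. The shared steps are: a unique maximal solution on $[0,T_{\max})$; the Lyapunov estimate $\dot V\le-\eta_{\mathrm c}V$; the a priori bound $|u_i|\le r_{i,\mathcal K}<\bar r_i$, which gives a positive lower bound on $\sigma_i$ and hence bounded $v_i$; continuation to $T_{\max}=\infty$; and convergence of the average $\bar x$.

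The differences are minor. You rule out finite escape by the linear growth of $\mathbf{x}$ under bounded $\mathbf{u}$, whereas the paper bounds $\bar x$ on $[0,T_{\max})$ via $\int\|\boldsymbol{\varepsilon}\|$. You give the limit explicitly as $x_\infty=\bar x(0)+\tfrac1N\sum_i\varepsilon_i(0)/c_i$. Your first-exit argument is harmless but redundant: the maximal solution already lives in the open set $\mathbb{R}^N\times\mathcal{U}$, so \Cref{prop:reduced_input_dynamics} is valid on all of $[0,T_{\max})$.
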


\begin{proof}
Fix an arbitrary initial condition $(\mathbf{x}(0),\mathbf{u}(0)) \in \mathcal{K}$. Since $\mathbf{u}(0) \in \mathcal{U}$ and $\sigma_{i}(u_{i}(0))>0$ for all $i$, the closed-loop vector field is well defined at $t=0$, and hence a unique local solution exists on some maximal interval $[0,T_{\max})$.

Consider the Lyapunov function candidate
\begin{align}
    V
    :=
    \frac{1}{2}\|\tilde{\mathbf{x}}\|^{2}
    +
    \frac{1}{2}\|\boldsymbol{\varepsilon}\|^{2}. \label{eq:V_input_main}
\end{align}
Along \eqref{eq:eps_dynamics_main} and \eqref{eq:tilde_x_dynamics_main}, the derivative $\dot{V}$ can be written as
\begin{align}
    \dot{V}
    &=
    \tilde{\mathbf{x}}^{\top}
    \left(
        -kL\tilde{\mathbf{x}} + \Pi\boldsymbol{\varepsilon}
    \right)
    -
    \boldsymbol{\varepsilon}^{\top}C\boldsymbol{\varepsilon} \nonumber\\
    &=
    -k\tilde{\mathbf{x}}^{\top}L\tilde{\mathbf{x}}
    +
    \tilde{\mathbf{x}}^{\top}\Pi\boldsymbol{\varepsilon}
    -
    \boldsymbol{\varepsilon}^{\top}C\boldsymbol{\varepsilon}. \label{eq:Vdot_input_main_1}
\end{align}
Since $\tilde{\mathbf{x}} \in \mathcal{C}^{\perp}$ and $\mathcal{G}$ is connected,
\begin{align}
    \tilde{\mathbf{x}}^{\top}L\tilde{\mathbf{x}}
    \ge
    \lambda_{2}(L)\|\tilde{\mathbf{x}}\|^{2}. \label{eq:lap_lb_input_main}
\end{align}
Also, $\|\Pi\boldsymbol{\varepsilon}\| \le \|\boldsymbol{\varepsilon}\|$, so Young's inequality yields
\begin{align}
    \tilde{\mathbf{x}}^{\top}\Pi\boldsymbol{\varepsilon}
    \le
    \frac{k\lambda_{2}(L)}{2}\|\tilde{\mathbf{x}}\|^{2}
    +
    \frac{1}{2k\lambda_{2}(L)}\|\boldsymbol{\varepsilon}\|^{2}. \label{eq:young_input_main}
\end{align}
Substituting \eqref{eq:lap_lb_input_main} and \eqref{eq:young_input_main} into \eqref{eq:Vdot_input_main_1}, we obtain
\begin{align}
    \dot{V}
    \le
    -\frac{k\lambda_{2}(L)}{2}\|\tilde{\mathbf{x}}\|^{2}
    -
    \left(
        c_{\min} - \frac{1}{2k\lambda_{2}(L)}
    \right)\|\boldsymbol{\varepsilon}\|^{2}. \label{eq:Vdot_input_main_2}
\end{align}
By the gain condition $c_{\min}$, the constant $\eta_{\mathrm{c}}$ defined in \eqref{eq:eta_c_input_main} is strictly positive, and thus $ \dot{V} \le -\eta_{\mathrm{c}}V$. Therefore, $V(t)
    \le
    e^{-\eta_{\mathrm{c}}t}V(0),
    ~
    \forall t \in [0,T_{\max})$. This proves \eqref{eq:joint_decay_input_main} on $[0,T_{\max})$. Indeed,
\begin{align}
    \|\tilde{\mathbf{x}}(t)\|
    \le
    R_{\mathcal{K}},
    ~~
    \|\boldsymbol{\varepsilon}(t)\|
    \le
    R_{\mathcal{K}},
    ~~
    \forall t \in [0,T_{\max}). \label{eq:tildex_eps_bound_input_main}
\end{align}
From \eqref{eq:alpha_i_main}, $|\alpha_{i}(t)|
    =
    k\left|
        \sum_{j=1}^{N} a_{ij}(x_{i}-x_{j})
    \right| = k\left|
        \sum_{j=1}^{N} a_{ij}(\tilde{x}_{i}-\tilde{x}_{j})
    \right|$ reduces to $|\alpha_{i}(t)|\le
    k\sum_{j=1}^{N} a_{ij}
    \left(
        |\tilde{x}_{i}| + |\tilde{x}_{j}|
    \right) \le
    2kd_{i}\|\tilde{\mathbf{x}}(t)\|
    \le
    2kd_{i}R_{\mathcal{K}}$, and hence $|u_{i}(t)|
    \le
    |\alpha_{i}(t)| + |\varepsilon_{i}(t)|
    \le
    (2kd_{i}+1)R_{\mathcal{K}}
    =
    r_{i,\mathcal{K}}~\forall t \in [0,T_{\max})$. By \eqref{eq:interiority_condition_input_main}, this implies
\begin{align}
    u_{i}(t)
    \in
    [-r_{i,\mathcal{K}},r_{i,\mathcal{K}}]
    \subset
    (\underline{u}_{i},\overline{u}_{i}),
    \forall t \in [0,T_{\max}),  \forall i \in \mathcal{V}. \label{eq:u_inner_interval_input_main}
\end{align}
Thus, \eqref{eq:input_admissibility_input_main} holds on $[0,T_{\max})$, and $\sigma_{i}(u_{i}(t))
    \ge
    \underline{\sigma}_{i,\mathcal{K}}
    :=
    \min
    \left\{
        1 - \left(\frac{r_{i,\mathcal{K}}}{\overline{u}_{i}}\right)^{\gamma_{i}},
        \,
        1 - \left(\frac{r_{i,\mathcal{K}}}{-\underline{u}_{i}}\right)^{\gamma_{i}}
    \right\}
    >
    0$ for all $t \in [0,T_{\max})$. Also, from \eqref{eq:alpha_dot_i_main},
    $|\dot{\alpha}_{i}(t)|
    =
    k\left|
        \sum_{j=1}^{N} a_{ij}(u_{i}-u_{j})
    \right|
    \le
    k\sum_{j=1}^{N} a_{ij}
    \left(
        |u_{i}| + |u_{j}|
    \right)
    \le
    2kd_{i}r_{\mathcal{K}}^{\max}$, where $r_{\mathcal{K}}^{\max}
    :=
    \max_{1 \le \ell \le N} r_{\ell,\mathcal{K}}$. Therefore, from \eqref{eq:v_i_main},
\begin{align}
    |v_{i}(t)|
    \le
    \frac{
        p_{2,i}r_{i,\mathcal{K}}
        +
        2kd_{i}r_{\mathcal{K}}^{\max}
        +
        c_{i}R_{\mathcal{K}}
    }{
        p_{1,i}\underline{\sigma}_{i,\mathcal{K}}
    }
    =:
    \bar{v}_{i,\mathcal{K}}, \label{eq:v_bound_input_main}
\end{align}
$\forall t \in [0,T_{\max})$. Hence, the commanded inputs are bounded on $[0,T_{\max})$. To show the completeness, define the network average $\bar{x}(t)
    :=
    \frac{1}{N}\mathbf{1}_{N}^{\top}\mathbf{x}(t)$. Premultiplying $\dot{\mathbf{x}}=-kL\mathbf{x}+\boldsymbol{\varepsilon}$ by $\frac{1}{N}\mathbf{1}_{N}^{\top}$ yields
\begin{align}
    \dot{\bar{x}}(t)
    =
    \frac{1}{N}\mathbf{1}_{N}^{\top}\boldsymbol{\varepsilon}(t). \label{eq:avg_dot_input_main}
\end{align}
Since \eqref{eq:eps_explicit_input_main} holds on $[0,T_{\max})$,
\begin{align}
    |\bar{x}(t)|
    &\le
    |\bar{x}(0)|
    +
    \frac{1}{\sqrt{N}}
    \int_{0}^{t}
    \|\boldsymbol{\varepsilon}(s)\|\,ds \nonumber\\
    &\le
    |\bar{x}(0)|
    +
    \frac{R_{\mathcal{K}}}{\sqrt{N}c_{\min}},
    ~~
    \forall t \in [0,T_{\max}). \label{eq:avg_bound_input_main}
\end{align}
Thus, $\mathbf{x}(t)=\bar{x}(t)\mathbf{1}_{N}+\tilde{\mathbf{x}}(t)$ is bounded on $[0,T_{\max})$ by \eqref{eq:avg_bound_input_main} and \eqref{eq:tildex_eps_bound_input_main}. Together with \eqref{eq:u_inner_interval_input_main} and \eqref{eq:v_bound_input_main}, this shows that all closed-loop signals remain in a compact subset of the domain of the closed-loop vector field on $[0,T_{\max})$. Therefore, by the continuation theorem for ordinary differential equations, finite escape is impossible, and thus $T_{\max}=\infty$.

Now, \eqref{eq:eps_explicit_input_main} follows from \eqref{eq:eps_dynamics_main}, and \eqref{eq:joint_decay_input_main} implies $\tilde{\mathbf{x}}(t)\to \mathbf{0}$ exponentially. Since $\dot{\bar{x}}(\cdot)$ is integrable on $[0,\infty)$ by \eqref{eq:avg_dot_input_main} and \eqref{eq:eps_explicit_input_main}, the average converges to a finite limit
\begin{align}
    x_{\infty}
    :=
    \bar{x}(0)
    +
    \frac{1}{N}\int_{0}^{\infty}\mathbf{1}_{N}^{\top}\boldsymbol{\varepsilon}(s)\,ds. \label{eq:x_infty_input_main}
\end{align}
Hence, \eqref{eq:consensus_limit_input_main} follows from $\mathbf{x}(t)=\bar{x}(t)\mathbf{1}_{N}+\tilde{\mathbf{x}}(t)$.
\end{proof}

\begin{remark}\label{rem:input_semiglobal_main}
\Cref{thm:input_consensus_main} is semiglobal with respect to compact admissible initial sets. The interiority condition \eqref{eq:interiority_condition_input_main} is explicit and checks, \emph{a priori}, that the realized actuator inputs remain in a strict interior subset of their admissible intervals. This avoids any need to assume boundedness of the commanded inputs or positivity of $\sigma_{i}(u_{i})$ along the trajectory.
\end{remark}
\begin{remark}
The interiority condition in \eqref{eq:interiority_condition_input_main}
is conservative because it guarantees admissibility by placing the realized
input $u_i(t)$ inside a symmetric compact subinterval
$[-r_{i,\mathcal{K}},r_{i,\mathcal{K}}]$ contained in the generally
asymmetric admissible set $\mathcal{U}_i$. This choice leads to a compact
semiglobal proof and provides a simple a priori check for strict actuator
admissibility. Less conservative conditions could be obtained by deriving
separate one-sided bounds on the positive and negative excursions of
$u_i(t)$, but such refinements are not pursued here.
\end{remark}
To treat the safe-consensus problem, we strengthen the safe-set description by requiring the existence of a common time-varying admissible core interval.
\begin{assumption}\label{ass:common_safe_core_main}
For all time $t \ge 0$, there exist functions $\underline{\xi},\overline{\xi} : \mathbb{R}_{\ge 0} \to \mathbb{R}$ of class $C^{2}$ and positive constants $\delta_{\xi}$, $\Delta_{\xi}$, $\bar{\xi}_{0}$, $\bar{d}_{\xi}$, and $\bar{s}_{\xi}$ such that $\underline{x}_{i}(t)
    <
    \underline{\xi}(t)
    <
    \overline{\xi}(t)
    <
    \overline{x}_{i}(t)$, for all  $i \in \mathcal{V}$ and $t \ge 0$ with
\begin{align}
    \delta_{\xi}
    \le
    \overline{\xi}(t)-\underline{\xi}(t)
    \le
    \Delta_{\xi},\label{eq:common_safe_core_width_main}
\end{align}
\begin{align}
    |\underline{\xi}(t)|+|\overline{\xi}(t)|
    \le
    \bar{\xi}_{0},
    \label{eq:common_safe_core_bounded}
\end{align}
and
\begin{align}
    |\dot{\underline{\xi}}(t)| + |\dot{\overline{\xi}}(t)|
    \le
    \bar{d}_{\xi},
    ~~
    |\ddot{\underline{\xi}}(t)| + |\ddot{\overline{\xi}}(t)|
    \le
    \bar{s}_{\xi}. \label{eq:common_safe_core_derivatives_main}
\end{align}
\end{assumption}
Let the common admissible core interval (see \Cref{fig:set_based_safe_consensus_strict}) be 
\begin{align}
    \Omega(t)
    :=
    \left(\underline{\xi}(t),\overline{\xi}(t)\right),
    ~~
    t \ge 0,\label{eq:Omega}
\end{align}
and $x^{\star} : \mathbb{R}_{\ge 0} \to \mathbb{R}$ be a prescribed admissible trajectory satisfying $x^{\star}(t) \in \Omega(t) \subseteq \mathcal{X}_{i}(t),$ $\forall i \in \mathcal{V},~
    \forall t \ge 0$. Define the transformed prescribed trajectory
\begin{align}
    z^{\star}(t)
    :=
    \ln
    \left(
        \frac{x^{\star}(t)-\underline{\xi}(t)}
        {\overline{\xi}(t)-x^{\star}(t)}
    \right),
    ~~
    t \ge 0. \label{eq:z_star_main}
\end{align}
Assume that $z^{\star}$ is of class $C^{2}$ and that there exist positive constants $\bar{z}_{\star}$, $\bar{d}_{\star}$, and $\bar{s}_{\star}$ such that
\begin{align}
    |z^{\star}(t)|
    \le
    \bar{z}_{\star},
    ~~
    |\dot{z}^{\star}(t)|
    \le
    \bar{d}_{\star},
    ~~
    |\ddot{z}^{\star}(t)|
    \le
    \bar{s}_{\star},
    ~~
    \forall t \ge 0. \label{eq:z_star_bounds_main}
\end{align}
For each agent $i$, let the barrier coordinate be
\begin{align}
    z_{i}
    :=
    \ln
    \left(
        \frac{x_{i}-\underline{\xi}(t)}
        {\overline{\xi}(t)-x_{i}}
    \right).\label{eq:zidef}
\end{align}

\begin{lemma}\label{lem:barrier_identities_main}
For each fixed $t \ge 0$, the mapping
\begin{align}
    \varphi_{t}(x)
    :=
    \ln
    \left(
        \frac{x-\underline{\xi}(t)}
        {\overline{\xi}(t)-x}
    \right),
    ~~
    x \in \Omega(t),\label{eq:phi_t_main}
\end{align}
is a $C^{1}$-diffeomorphism from $\Omega(t)$ onto $\mathbb{R}$, with inverse
\begin{align}
    \varphi_{t}^{-1}(z)
    =
    \frac{
        \underline{\xi}(t)+\overline{\xi}(t)e^{z}
    }{
        1 + e^{z}
    }. \label{eq:phi_inv_main}
\end{align}
Moreover, along every differentiable trajectory satisfying $x_{i}(t) \in \Omega(t)$, one has $\dot{z}_{i}
    =
    b_{i}(x_{i},t)u_{i} + d_{i}(x_{i},t)$, where
\begin{align}
    b_{i}(x_{i},t)
    :=&
    \frac{
        \overline{\xi}(t)-\underline{\xi}(t)
    }{
        \left(x_{i}-\underline{\xi}(t)\right)
        \left(\overline{\xi}(t)-x_{i}\right)
    }, \label{eq:b_i_main}\\
    d_{i}(x_{i},t)
    :=&
    -
    \frac{\dot{\underline{\xi}}(t)}
    {x_{i}-\underline{\xi}(t)}
    -
    \frac{\dot{\overline{\xi}}(t)}
    {\overline{\xi}(t)-x_{i}}. \label{eq:d_i_main}
\end{align}
\end{lemma}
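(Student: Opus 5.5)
The plan is a direct calculation in two parts: first the diffeomorphism property for fixed $t$, then the chain rule along trajectories.

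\textbf{Step 1: diffeomorphism for fixed $t$.} Fix $t \ge 0$ and write $a:=\underline{\xi}(t)$, $b:=\overline{\xi}(t)$. By \Cref{ass:common_safe_core_main}, $b-a \ge \delta_{\xi} > 0$. For $x \in (a,b)$ both $x-a$ and $b-x$ are positive, so the argument of the logarithm is positive and $\varphi_{t}$ is well defined and $C^{\infty}$ on $\Omega(t)$. Differentiating $\ln(x-a)-\ln(b-x)$ gives
\begin{align*}
\varphi_{t}'(x)=\frac{1}{x-a}+\frac{1}{b-x}=\frac{b-a}{(x-a)(b-x)}>0,
\end{align*}
so $\varphi_{t}$ is strictly increasing. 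Moreover $\varphi_{t}(x)\to-\infty$ as $x\downarrow a$ and $\varphi_{t}(x)\to+\infty$ as $x\uparrow b$. By continuity and the intermediate value theorem, $\varphi_{t}$ is therefore a bijection onto $\mathbb{R}$.

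To obtain the inverse, solve $e^{z}=(x-a)/(b-x)$ for $x$, which gives $x=(a+be^{z})/(1+e^{z})$. This is exactly \eqref{eq:phi_inv_main}. The inverse is smooth in $z$, and since $\varphi_{t}'$ never vanishes, the inverse function theorem also yields smoothness. Hence $\varphi_{t}$ is a $C^{1}$ (indeed $C^{\infty}$) diffeomorphism.

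\textbf{Step 2: time derivative of $z_i$.} Along a differentiable trajectory with $x_{i}(t)\in\Omega(t)$, write $z_{i}=\ln(x_{i}-\underline{\xi})-\ln(\overline{\xi}-x_{i})$. The boundaries $\underline{\xi},\overline{\xi}$ are $C^{2}$, hence $C^{1}$, and both logarithm arguments stay positive, so the chain rule applies:
\begin{align*}
\dot z_{i}=\frac{\dot x_{i}-\dot{\underline{\xi}}}{x_{i}-\underline{\xi}}-\frac{\dot{\overline{\xi}}-\dot x_{i}}{\overline{\xi}-x_{i}}.
\end{align*}
Collect the $\dot x_{i}$ terms, whose coefficient is the $\varphi_{t}'$ computed in Step 1, namely $b_{i}(x_{i},t)$ in \eqref{eq:b_i_main}. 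The remaining terms are exactly $d_{i}(x_{i},t)$ in \eqref{eq:d_i_main}. Substituting $\dot x_{i}=u_{i}$ from \eqref{eq:agent_dynamics} completes the identity.

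\textbf{Main obstacle.} Nothing here is deep. The only point needing care is that all expressions are evaluated while $x_{i}(t)$ stays strictly inside $\Omega(t)$, so that no denominator vanishes. This is exactly the hypothesis of the lemma. The positive width $\delta_{\xi}$ from \Cref{ass:common_safe_core_main} guarantees that $\Omega(t)$ is a nondegenerate open interval, so the limits at its endpoints used in Step 1 are legitimate.
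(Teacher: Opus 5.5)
Your proposal is correct and follows essentially the same route as the paper's proof. Both use a positive derivative for strict monotonicity, the endpoint limits $\mp\infty$ for surjectivity, explicit inversion of $e^{z}=(x-\underline{\xi}(t))/(\overline{\xi}(t)-x)$, and the chain rule, identifying the coefficient of $\dot{x}_i=u_i$ as $b_i$ and the remaining terms as $d_i$.
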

\begin{proof}
Fix an arbitrary $t\ge 0$. Using \eqref{eq:Omega} and \Cref{ass:common_safe_core_main} gives $\underline{\xi}(t)<\overline{\xi}(t)$. Hence, for every $x\in\Omega(t)$, one has $x-\underline{\xi}(t)>0$ and $\overline{\xi}(t)-x>0$. Therefore, the map \eqref{eq:phi_t_main} is well defined on $\Omega(t)$. Moreover, $\varphi_t(x) = \ln{\left(x-\underline{\xi}(t)\right)} - \ln{\left(\overline{\xi}(t)-x\right)}$ is continuously differentiable with respect to $x$ on $\Omega(t)$, and
\begin{align} 
\frac{\partial \varphi_t}{\partial x}(x) &= \frac{1}{x-\underline{\xi}(t)} + \frac{1}{\overline{\xi}(t)-x} \nonumber\\ 
&= \frac{ \overline{\xi}(t)-\underline{\xi}(t) }{ \left(x-\underline{\xi}(t)\right) \left(\overline{\xi}(t)-x\right) } > 0. \label{eq:phi_x_derivative_main} 
\end{align}
Consequently, $\varphi_t$ is strictly increasing on $\Omega(t)$ and is therefore one-to-one. In addition, using \eqref{eq:phi_t_main}, $\lim_{x\to\underline{\xi}(t)^{+}}\varphi_t(x) = -\infty$, and $\lim_{x\to\overline{\xi}(t)^{-}}\varphi_t(x) = +\infty$. Since $\varphi_t$ is continuous and strictly increasing on $\Omega(t)$, the previous limits imply that $\varphi_t(\Omega(t))=\mathbb{R}$. Hence, $\varphi_t$ is bijective from $\Omega(t)$ onto $\mathbb{R}$.

We next compute its inverse. Let $z=\varphi_t(x)$. Then $e^{z} = \frac{ x-\underline{\xi}(t) }{ \overline{\xi}(t)-x }$ can be rearranged to obtain $e^{z} \left( \overline{\xi}(t)-x \right) = x-\underline{\xi}(t)$ and therefore $x = \frac{ \underline{\xi}(t)+\overline{\xi}(t)e^{z} }{ 1+e^{z} }$. This proves the inverse expression in \eqref{eq:phi_inv_main}. Furthermore, for every $z\in\mathbb{R}$, \eqref{eq:phi_inv_main} gives $\varphi_t^{-1}(z)-\underline{\xi}(t) = \frac{ \left(\overline{\xi}(t)-\underline{\xi}(t)\right)e^{z} }{ 1+e^{z} } > 0$ and $\overline{\xi}(t)-\varphi_t^{-1}(z) = \frac{ \overline{\xi}(t)-\underline{\xi}(t) }{ 1+e^{z} } > 0$. Thus, $\varphi_t^{-1}(z)\in\Omega(t)$ for all $z\in\mathbb{R}$. Also, $\frac{\partial \varphi_t^{-1}}{\partial z}(z) = \left( \overline{\xi}(t)-\underline{\xi}(t) \right) \frac{e^{z}}{\left(1+e^{z}\right)^{2}}$ which is continuous in $z$. Thus, $\varphi_t^{-1}$ is continuously differentiable. Combining this fact with \eqref{eq:phi_x_derivative_main}, $\varphi_t$ is a $C^{1}$-diffeomorphism from $\Omega(t)$ onto $\mathbb{R}$.

It remains to establish the transformed dynamics. Let $x_i(\cdot)$ be any differentiable trajectory satisfying $x_i(t)\in\Omega(t)$, and define $z_i$ as in \eqref{eq:zidef}. Differentiating along the trajectory yields
\begin{align} 
\dot{z}_i &= \frac{ \dot{x}_i-\dot{\underline{\xi}}(t) }{ x_i-\underline{\xi}(t) } - \frac{ \dot{\overline{\xi}}(t)-\dot{x}_i }{ \overline{\xi}(t)-x_i } \nonumber\\
&= \left[ \frac{1}{x_i-\underline{\xi}(t)} + \frac{1}{\overline{\xi}(t)-x_i} \right]\dot{x}_i - \frac{ \dot{\underline{\xi}}(t) }{ x_i-\underline{\xi}(t) } - \frac{ \dot{\overline{\xi}}(t) }{ \overline{\xi}(t)-x_i } \nonumber\\ 
&= \frac{ \overline{\xi}(t)-\underline{\xi}(t) }{ \left(x_i-\underline{\xi}(t)\right) \left(\overline{\xi}(t)-x_i\right) } \dot{x}_i - \frac{ \dot{\underline{\xi}}(t) }{ x_i-\underline{\xi}(t) } - \frac{ \dot{\overline{\xi}}(t) }{ \overline{\xi}(t)-x_i }. \label{eq:zi_dot_derivation_main} 
\end{align}
Using \eqref{eq:agent_dynamics} and comparing \eqref{eq:zi_dot_derivation_main} with \eqref{eq:b_i_main} and \eqref{eq:d_i_main}, we obtain $\dot{z}_i = b_i(x_i,t)u_i+d_i(x_i,t)$. This proves the claimed transformed dynamics and completes the proof.
\end{proof}
Define the transformed synchronization input
\begin{align}
    \beta_{i}
    :=
    \dot{z}^{\star}(t)
    -
    k_{z}\sum_{j=1}^{N} a_{ij}(z_{i}-z_{j})
    -
    \kappa\left(z_{i}-z^{\star}(t)\right),
     \label{eq:beta_i_main}
\end{align}
where $k_{z},\kappa \in \mathbb{R}_{>0}$. Let
\begin{align}
    \alpha_{i}
    :=
    \frac{\beta_{i}-d_{i}(x_{i},t)}
    {b_{i}(x_{i},t)}, \label{eq:alpha_i_safe_main}
\end{align}
and define the actuator-tracking error associated with the safe-consensus virtual input as
\begin{align}
    \varepsilon_i
    :=
    u_i-\alpha_i,
    ~~
    i\in\mathcal{V}.
    \label{eq:eps_safe_def}
\end{align}
The commanded input is then chosen as
\begin{align}
    v_{i}
    =
    \frac{
        p_{2,i}u_{i}
        +
        \dot{\alpha}_{i}
        -
        c_{i}\varepsilon_{i}
    }{
        p_{1,i}\sigma_{i}(u_{i})
    },
    ~~
    c_{i} \in \mathbb{R}_{>0}. \label{eq:v_i_safe_main}
\end{align}
\begin{remark}
The term $\dot{\alpha}_i$ in the commanded signal can be evaluated without introducing any nonlocal information or algebraic dependence on $v_i$. Indeed, direct differentiation of \eqref{eq:alpha_i_safe_main} gives $\dot{\alpha}_i = \frac{\dot{\beta}_i-\dot{d}_i}{b_i} - \frac{\left(\beta_i-d_i\right)\dot{b}_i}{b_i^{2}}$, where, along the closed-loop trajectories, $\dot{b}_i = \frac{\partial b_i}{\partial x_i}u_i + \frac{\partial b_i}{\partial t}$ and $\dot{d}_i = \frac{\partial d_i}{\partial x_i}u_i + \frac{\partial d_i}{\partial t}$. Furthermore, differentiating \eqref{eq:beta_i_main} yields $\dot{\beta}_i = \ddot{z}^{\star}(t) - k_z \sum_{j=1}^{N} a_{ij} \left( \dot{z}_i-\dot{z}_j \right) - \kappa \left( \dot{z}_i-\dot{z}^{\star}(t) \right)$ with $\dot{z}_{\ell} = b_{\ell}(x_{\ell},t)u_{\ell} + d_{\ell}(x_{\ell},t),~\ell\in \{i\}\cup\mathcal{N}_i$. Therefore, agent $i$ can compute $\dot{\alpha}_i$ using its own variables $x_i,u_i$, the neighbor-relative information $\{x_j-x_i,u_j-u_i\}_{j\in\mathcal{N}_i}$, and the known functions $\underline{\xi}(t)$, $\overline{\xi}(t)$, $z^{\star}(t)$ and their derivatives. No derivative of $u_i$ and no commanded signal $v_j$ from neighboring agents is required. Hence, the commanded signal $v_i$ remains distributed in the sense of the information structure in \eqref{eq:controller_state}--\eqref{eq:controller_output}.
\end{remark}
\begin{figure}[ht!]
\centering
\resizebox{\linewidth}{!}{%
\begin{tikzpicture}[
    scale=1.0,
    every node/.style={font=\small},
    >={Stealth[length=2.5mm, width=1.5mm]}
]

\definecolor{setblue}{RGB}{31, 119, 180}
\definecolor{setgreen}{RGB}{44, 160, 44}
\definecolor{setorange}{RGB}{255, 127, 14}
\definecolor{safecore}{RGB}{214, 39, 40}

\filldraw[
    draw=black!50,
    fill=gray!12,
    line width=0 pt,
    dashed
]
plot [smooth cycle, tension=0.8] coordinates
{
    (-4.8, -2.2) (-4.2, 2.6) (-1.0, 3.8) (2.8, 3.4) 
    (4.8, 1.5) (4.5, -2.5) (1.0, -3.2) (-2.5, -3.0)
};
\node[text=black!70, font=\normalsize\bfseries] at (-3.0, 3.2) {$\mathcal{S}(t)$};

\filldraw[
    draw=setblue,
    fill=setblue,
    fill opacity=0.15,
    line width=0.2pt
]
(-1.0, 0.2) ellipse (2.8 and 1.6);
\node[setblue, font=\normalsize\bfseries] at (-3, 0.8) {$\mathcal{X}_1(t)$};

\filldraw[
    draw=setgreen,
    fill=setgreen,
    fill opacity=0.15,
    line width=0.2pt
]
plot [smooth cycle, tension=0.9] coordinates
{
    (-1.2, -1.0) (-1.5, 1.8) (0.0, 2.8) (2.0, 2.2) (2.4, 0.2) (1.0, -1.6)
};
\node[setgreen, font=\normalsize\bfseries] at (0, 2.5) {$\mathcal{X}_2(t)$};

\filldraw[
    draw=setorange,
    fill=setorange,
    fill opacity=0.15,
    line width=0.2pt
]
plot [smooth cycle, tension=0.7] coordinates
{
    (-0.2, -2.0) (3.2, -1.6) (3.8, 0.5) (2.5, 2.2) (0.5, 1.5)
};
\node[setorange] at (3.2, 0.8) {$\mathcal{X}_N(t)$};

\node[text=black!60, font=\Large, rotate=-10] at (1.0, -1.2) {$\cdots$};

\filldraw[
    draw=safecore,
    fill=safecore,
    fill opacity=0.35,
    line width=0.8pt
]
plot [smooth cycle, tension=0.8] coordinates
{
    (0.2, 0.0) (0.1, 0.5) (0.5, 0.9) (1.0, 0.7) (1.1, 0.2) (0.7, -0.2)
};
\node[text=safecore!90!black, font=\normalsize\bfseries] at (0.8, 1.1) {$\Omega(t)$};

\fill[black!90] (0.6, 0.3) circle (2pt);
\node[below right] at (0.6, 0.6) {$x^\star(t)$};

\fill[setblue] (-2.6, 0.0) circle (1.8pt);
\node[left, setblue] at (-2.6, 0.0) {$x_1$};

\fill[setgreen] (-0.5, 1.6) circle (1.8pt);
\node[above left, setgreen] at (-0.5, 1.4) {$x_2$};

\fill[setorange] (2.8, -0.7) circle (1.8pt);
\node[right, setorange] at (2.8, -0.8) {$x_N$};

\draw[->, setblue, thick, dashed] (-2.5, 0.0) to[bend right=10] (0.15, 0.2);
\draw[->, setgreen, thick, dashed] (-0.4, 1.5) to[bend left=10] (0.4, 0.6);
\draw[->, setorange, thick, dashed] (2.7, -0.7) to[bend left=15] (0.9, 0.1);

\end{tikzpicture}%
}
\caption{Set-based illustration of the proposed safe-consensus framework. Each agent $i$ operates within its individual admissible set $\mathcal{X}_i(t)$, all of which reside within the broader state space $\mathcal{S}(t)$. The common safe core $\Omega(t)$ is assumed to lie strictly inside the
mutual intersection $\bigcap_i\mathcal{X}_i(t)$. Coordination ensures the network converges to a synchronized reference $x^\star(t)\in\Omega(t)$.}
\label{fig:set_based_safe_consensus_strict}
\end{figure}
\begin{proposition}\label{prop:reduced_safe_dynamics}
Consider the closed-loop system consisting of \eqref{eq:agent_dynamics}, \eqref{eq:actuator_dynamics}, and \eqref{eq:v_i_safe_main}. Assume that the corresponding solution is well defined on an interval on which $x_{i}(t) \in \Omega(t)$, $u_{i}(t) \in \mathcal{U}_{i}$, $\forall i \in \mathcal{V}$. Let $\mathbf{z}
    :=
    \mathrm{col}\{z_{1},\dots,z_{N}\}$, $\boldsymbol{\zeta}
    :=
    \mathbf{z}-z^{\star}(t)\mathbf{1}_{N}$, and $B(\mathbf{x},t)
    :=
    \mathrm{diag}\{b_{1}(x_{1},t),\dots,b_{N}(x_{N},t)\}$. Then one has
\begin{align}
    \dot{\varepsilon}_{i}
    =
    -c_{i}\varepsilon_{i},
    ~~
    i \in \mathcal{V}, \label{eq:eps_dynamics_safe_main}
\end{align}
and
\begin{align}
    \dot{\boldsymbol{\zeta}}
    =
    -\left(k_{z}L+\kappa I_{N}\right)\boldsymbol{\zeta}
    +
    B(\mathbf{x},t)\boldsymbol{\varepsilon}. \label{eq:zeta_dynamics_safe_main}
\end{align}
\end{proposition}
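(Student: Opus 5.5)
The proof mirrors that of \Cref{prop:reduced_input_dynamics}, with the transformed dynamics of \Cref{lem:barrier_identities_main} replacing the raw plant equation. Throughout, we work on an interval where $x_i(t)\in\Omega(t)$ and $u_i(t)\in\mathcal{U}_i$ for all $i$. On such an interval the following hold.
\begin{itemize}
\item The coordinates $z_i$ are well defined and $C^1$.
\item The factors satisfy $b_i(x_i,t)>0$ and are finite.
\item The regularization factor satisfies $\sigma_i(u_i)>0$.
\end{itemize}
Hence $\alpha_i$ in \eqref{eq:alpha_i_safe_main} and $v_i$ in \eqref{eq:v_i_safe_main} are well defined. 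Moreover, $\alpha_i$ is differentiable along the trajectory, because $\underline{\xi},\overline{\xi}\in C^2$ and $z^\star\in C^2$. Its derivative $\dot\alpha_i$ is exactly the quantity computed in the preceding remark, which depends only on $(\mathbf{x},\mathbf{u},t)$.

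\textbf{Step 1: the $\varepsilon_i$-dynamics.} Differentiate \eqref{eq:eps_safe_def} to get $\dot\varepsilon_i=\dot u_i-\dot\alpha_i$. Substitute \eqref{eq:actuator_dynamics} for $\dot u_i$ and then \eqref{eq:v_i_safe_main} for $v_i$. The factor $p_{1,i}\sigma_i(u_i)$ cancels, which is legitimate because $\sigma_i(u_i)>0$. The terms $p_{2,i}u_i$ and $\dot\alpha_i$ then cancel as well, leaving $\dot\varepsilon_i=-c_i\varepsilon_i$. This is \eqref{eq:eps_dynamics_safe_main}.

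\textbf{Step 2: the $\zeta$-dynamics.} By \Cref{lem:barrier_identities_main}, $\dot z_i=b_iu_i+d_i$. Write $u_i=\alpha_i+\varepsilon_i$ and insert \eqref{eq:alpha_i_safe_main}. Since $b_i\alpha_i=\beta_i-d_i$, this gives
\begin{align}
\dot z_i=\beta_i+b_i\varepsilon_i.
\end{align}
Subtract $\dot z^\star$ and use \eqref{eq:beta_i_main} to obtain
\begin{align}
\dot\zeta_i=-k_z\sum_j a_{ij}(z_i-z_j)-\kappa(z_i-z^\star)+b_i\varepsilon_i.
\end{align}
Because $z_i-z_j=\zeta_i-\zeta_j$ and $z_i-z^\star=\zeta_i$, stacking over $i$ gives $\dot{\boldsymbol\zeta}=-(k_zL+\kappa I_N)\boldsymbol\zeta+B(\mathbf{x},t)\boldsymbol\varepsilon$. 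The Laplacian term uses $L\mathbf{1}_N=\mathbf{0}_N$, so $L\mathbf{z}=L\boldsymbol\zeta$. This is \eqref{eq:zeta_dynamics_safe_main}.

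\textbf{Main point of care.} There is no real obstacle beyond justifying the divisions and derivatives, and these are all covered by the standing hypotheses of the interval:
\begin{itemize}
\item $b_i\neq0$, from \eqref{eq:phi_x_derivative_main};
\item $\sigma_i\neq0$, from $u_i\in\mathcal{U}_i$;
\item $\alpha_i$ is $C^1$ along solutions, so the $\dot\alpha_i$ used in $v_i$ coincides with the true time derivative of $\alpha_i$.
\end{itemize}
The last item holds because the closed-loop vector field is locally Lipschitz on the open domain $\{x_i\in\Omega(t),\,u_i\in\mathcal{U}_i\}$.
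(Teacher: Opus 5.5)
Your proposal is correct and follows the paper's own argument: the $\varepsilon_i$-dynamics come from the same cancellation as in \Cref{prop:reduced_input_dynamics}, and the $\boldsymbol{\zeta}$-dynamics come from $\dot z_i=b_iu_i+d_i$, the identity $b_i\alpha_i=\beta_i-d_i$, and stacking with $L\mathbf{1}_N=\mathbf{0}_N$. Your explicit justification of the divisions and of $\dot\alpha_i$ being the true time derivative is a welcome addition that the paper leaves implicit; note that the latter really follows from the chain rule, since $\alpha_i$ is $C^1$ in $(\mathbf{x},t)$ when $\underline{\xi},\overline{\xi},z^\star\in C^2$, rather than from local Lipschitzness of the vector field.
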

\begin{proof}
From \Cref{lem:barrier_identities_main}, $\dot{z}_{i}
    =
    b_{i}(x_{i},t)u_{i} + d_{i}(x_{i},t)$. Using $u_{i}=\alpha_{i}+\varepsilon_{i}$ and \eqref{eq:alpha_i_safe_main}, we obtain
\begin{align}
    \dot{z}_{i}
    =&
    b_{i}(x_{i},t)\alpha_{i}
    +
    d_{i}(x_{i},t)
    +
    b_{i}(x_{i},t)\varepsilon_{i}\nonumber\\
    =&
    \beta_{i} + b_{i}(x_{i},t)\varepsilon_{i}.
\end{align}
Substituting \eqref{eq:beta_i_main} above gives
\begin{align}
    \dot{z}_{i}-\dot{z}^{\star}(t)
    =&
    -
    k_{z}\sum_{j=1}^{N}a_{ij}
    \left[
        \left(z_{i}-z^{\star}(t)\right)
        -
        \left(z_{j}-z^{\star}(t)\right)
    \right]\nonumber\\
    &-
    \kappa\left(z_{i}-z^{\star}(t)\right)
    +
    b_{i}(x_{i},t)\varepsilon_{i},
\end{align}
which yields \eqref{eq:zeta_dynamics_safe_main} in stacked form. The proof of \eqref{eq:eps_dynamics_safe_main} is identical to that of \Cref{prop:reduced_input_dynamics}.
\end{proof}

\begin{theorem}\label{thm:safe_consensus_main}
Suppose \Cref{ass:connected_graph,ass:actuator,ass:common_safe_core_main} hold. Let $\mathcal{K}
    \subset
    \Omega(0)^{N} \times \mathcal{U}$ be a compact set of initial conditions. For each $(\mathbf{x}_{0},\mathbf{u}_{0}) \in \mathcal{K}$, define
\begin{align}
    \mathbf{z}_{0}
    :=
    \mathrm{col}
    \left\{
        \ln
        \left(
            \frac{x_{1,0}-\underline{\xi}(0)}
            {\overline{\xi}(0)-x_{1,0}}
        \right),
        \dots,
        \ln
        \left(
            \frac{x_{N,0}-\underline{\xi}(0)}
            {\overline{\xi}(0)-x_{N,0}}
        \right)
    \right\},
\end{align}
\begin{align}
    \boldsymbol{\varepsilon}_{0}
    :=
    \mathbf{u}_{0} - \boldsymbol{\alpha}(\mathbf{x}_{0},0),
\end{align}
and
\begin{align}
    S_{\mathcal{K}}
    :=
    \sup_{(\mathbf{x}_{0},\mathbf{u}_{0}) \in \mathcal{K}}
    \left(
        \|\mathbf{z}_{0}-z^{\star}(0)\mathbf{1}_{N}\|^{2}
        +
        \|\boldsymbol{\varepsilon}_{0}\|^{2}
    \right)^{\frac{1}{2}}.
\end{align}
Let $M_{\mathcal{K}}
    :=
    \bar{z}_{\star} + S_{\mathcal{K}},
    ~
    \bar{b}_{\mathcal{K}}
    :=
    \frac{2(1+e^{M_{\mathcal{K}}})}{\delta_{\xi}},
    ~
    \bar{d}_{\mathcal{K}}
    :=
    \frac{2\bar{d}_{\xi}(1+e^{M_{\mathcal{K}}})}{\delta_{\xi}}$,
and, for each $i \in \mathcal{V}$, $\bar{\beta}_{i,\mathcal{K}}
    :=
    \bar{d}_{\star}
    +
    (2k_{z}d_{i}+\kappa)S_{\mathcal{K}}$, $\bar{\alpha}_{i,\mathcal{K}}
    :=
    \frac{\Delta_{\xi}}{4}
    \left(
        \bar{\beta}_{i,\mathcal{K}} + \bar{d}_{\mathcal{K}}
    \right),
    ~
    r_{i,\mathcal{K}}
    :=
    \bar{\alpha}_{i,\mathcal{K}} + S_{\mathcal{K}},
    ~
    \bar{r}_{i}
    :=
    \min\{\overline{u}_{i},-\underline{u}_{i}\}$.
Assume that
\begin{align}
    c_{\min}
    :=&
    \min_{1 \le i \le N} c_{i}
    >
    \frac{\bar{b}_{\mathcal{K}}^{2}}{2\kappa}, \label{eq:gain_condition_safe_main}\\
    r_{i,\mathcal{K}} <& \bar{r}_{i},
    ~~
    \forall i \in \mathcal{V}. \label{eq:interiority_condition_safe_main}
\end{align}
Then, for every initial condition $(\mathbf{x}(0),\mathbf{u}(0)) \in \mathcal{K}$, the closed-loop system consisting of \eqref{eq:agent_dynamics}, \eqref{eq:actuator_dynamics}, and \eqref{eq:v_i_safe_main} admits a unique complete solution satisfying the following properties:
\begin{enumerate}
    \item the output remains safe and the actuator inputs remain strictly admissible for all time, i.e.,
    \begin{align}
        x_{i}(t) \in \Omega(t) \subseteq \mathcal{X}_{i}(t),
        ~
        u_{i}(t) \in \mathcal{U}_{i},
        ~
        \forall t \ge 0,
        ~
        \forall i \in \mathcal{V}; \label{eq:safety_and_admissibility_safe_main}
    \end{align}
    \item the actuator-tracking errors satisfy
    \begin{align}
        \varepsilon_{i}(t)
        =
        e^{-c_{i}t}\varepsilon_{i}(0),
        ~~
        \forall t \ge 0,
        ~~
        \forall i \in \mathcal{V}; \label{eq:eps_explicit_safe_main}
    \end{align}
    \item the transformed synchronization error satisfies
    \begin{align}
        \|\boldsymbol{\zeta}(t)\|^{2}
        +
        \|\boldsymbol{\varepsilon}(t)\|^{2}
        \le
        e^{-\eta_{\mathrm{s}} t}
        \left(
            \|\boldsymbol{\zeta}(0)\|^{2}
            +
            \|\boldsymbol{\varepsilon}(0)\|^{2}
        \right),
        ~~
        \forall t \ge 0, \label{eq:joint_decay_safe_main}
    \end{align}
    where the term $\eta_{\mathrm{s}}$ is defined as 
    \begin{align}
        \eta_{\mathrm{s}}
        :=
        \min
        \left\{
            \kappa,
            \,
            2c_{\min} - \frac{\bar{b}_{\mathcal{K}}^{2}}{\kappa}
        \right\}
        >
        0; \label{eq:eta_s_safe_main}
    \end{align}
    \item the agent states asymptotically synchronize to the prescribed admissible trajectory $x^{\star}(t)$, i.e.,
    \begin{align}
        \lim_{t \to \infty}
        \left(
            x_{i}(t)-x^{\star}(t)
        \right)
        =
        0,
        ~~
        \forall i \in \mathcal{V}; \label{eq:x_to_xstar_safe_main}
    \end{align}
    and specifically,
    \begin{align}
        \lim_{t \to \infty}
        |x_{i}(t)-x_{j}(t)|
        =
        0,
        ~~
        \forall i,j \in \mathcal{V}; \label{eq:consensus_safe_main}
    \end{align}
    \item the commanded inputs $v_{i}$ are bounded on $[0,\infty)$, and all closed-loop signals remain bounded.
\end{enumerate}
Consequently, the controller \eqref{eq:v_i_safe_main} solves \Cref{prob:safe_consensus} on the compact admissible set $\mathcal{K}$, and in fact establishes the stronger objective of synchronization to the prescribed admissible trajectory $x^{\star}(t)$.
\end{theorem}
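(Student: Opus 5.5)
The argument mirrors the proof of \Cref{thm:input_consensus_main}, now in barrier coordinates. Fix $(\mathbf{x}(0),\mathbf{u}(0))\in\mathcal{K}$. Since $x_i(0)\in\Omega(0)$ and $u_i(0)\in\mathcal{U}_i$, the closed-loop vector field is locally Lipschitz near the initial point. This uses that $b_i,d_i$ are smooth on $\Omega(t)$, that $\sigma_i(u_i(0))>0$, and that $\underline{\xi},\overline{\xi},z^\star$ are $C^2$. Hence a unique solution exists on a maximal interval $[0,T_{\max})$, defined as the largest interval on which $x_i(t)\in\Omega(t)$ and $u_i(t)\in\mathcal{U}_i$ for all $i$. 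On this interval \Cref{prop:reduced_safe_dynamics} applies. The first consequence is $\varepsilon_i(t)=e^{-c_it}\varepsilon_i(0)$.

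Next I would take $V=\tfrac12\|\boldsymbol{\zeta}\|^2+\tfrac12\|\boldsymbol{\varepsilon}\|^2$. Using $L\ge0$, its derivative satisfies
\[
\dot V\le-\kappa\|\boldsymbol{\zeta}\|^2+\boldsymbol{\zeta}^\top B\boldsymbol{\varepsilon}-c_{\min}\|\boldsymbol{\varepsilon}\|^2 .
\]
The key point is that $B$ is state-dependent, so I need an a priori bound on it, which I get from a bootstrap. From $b_i=(\overline{\xi}-\underline{\xi})^{-1}(1+e^{z_i})(1+e^{-z_i})$ (via \eqref{eq:phi_inv_main}) we get $b_i\le 2(1+e^{|z_i|})/\delta_\xi$. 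Hence $b_i\le\bar b_{\mathcal{K}}$ whenever $|z_i|\le M_{\mathcal{K}}$, and this holds whenever $\|\boldsymbol{\zeta}\|\le S_{\mathcal{K}}$ by \eqref{eq:z_star_bounds_main}. Let $T^\ast\le T_{\max}$ be the supremum of times on which $\|\boldsymbol{\zeta}\|^2+\|\boldsymbol{\varepsilon}\|^2\le S_{\mathcal{K}}^2$ holds strictly beyond the initial value (or use a slightly enlarged level and a limiting argument). On $[0,T^\ast)$, Young's inequality gives $\boldsymbol{\zeta}^\top B\boldsymbol{\varepsilon}\le\frac{\kappa}{2}\|\boldsymbol{\zeta}\|^2+\frac{\bar b_{\mathcal{K}}^2}{2\kappa}\|\boldsymbol{\varepsilon}\|^2$. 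Therefore $\dot V\le-\eta_{\mathrm s}V$ by \eqref{eq:gain_condition_safe_main}. So $V$ is strictly nonincreasing, and continuity rules out $T^\ast<T_{\max}$. This closing of the circular dependence between the bound on $B$ and the decay of $V$ is the main obstacle. I would handle it by this standard continuity/contradiction argument, perturbing $S_{\mathcal{K}}$ to $S_{\mathcal{K}}+\epsilon$ if strictness is needed.

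With $\|\boldsymbol{\zeta}\|,\|\boldsymbol{\varepsilon}\|\le S_{\mathcal{K}}$, we have $|z_i|\le M_{\mathcal{K}}$. By \eqref{eq:phi_inv_main}, $x_i-\underline{\xi}\ge(\overline{\xi}-\underline{\xi})/(1+e^{M_{\mathcal{K}}})\ge\delta_\xi/(1+e^{M_{\mathcal{K}}})$, and similarly for $\overline{\xi}-x_i$. So $x_i$ stays uniformly inside $\Omega(t)$. This same estimate gives $|d_i|\le\bar d_{\mathcal{K}}$. Also $|\beta_i|\le\bar\beta_{i,\mathcal{K}}$, using $|z_i-z_j|\le|\zeta_i|+|\zeta_j|$. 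Since $1/b_i=(x_i-\underline{\xi})(\overline{\xi}-x_i)/(\overline{\xi}-\underline{\xi})\le\Delta_\xi/4$, we get $|\alpha_i|\le\bar\alpha_{i,\mathcal{K}}$, hence $|u_i|\le r_{i,\mathcal{K}}<\bar r_i$ by \eqref{eq:interiority_condition_safe_main}. Consequently $\sigma_i(u_i)$ is bounded below by a positive constant, exactly as in \Cref{thm:input_consensus_main}.

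Then $\dot\alpha_i$ is bounded, by the expression in the preceding remark. All its ingredients ($u_\ell$, $b_\ell$, $d_\ell$, their partial derivatives, $\ddot z^\star$, $\ddot{\underline\xi}$, $\ddot{\overline\xi}$) are bounded on the compact region just obtained, using \eqref{eq:common_safe_core_derivatives_main} and \eqref{eq:z_star_bounds_main}. Thus $v_i$ is bounded. The states $x_i$ are bounded via \eqref{eq:common_safe_core_bounded}. The trajectory therefore stays in a compact subset of the open domain $\{x_i\in\Omega(t),u_i\in\mathcal{U}_i\}$, so the continuation theorem yields $T_{\max}=\infty$. This gives items 1, 2, 3 and 5.

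Finally, $\boldsymbol{\zeta}\to0$ exponentially. Since $x_i=\varphi_t^{-1}(z_i)$ and $|\partial\varphi_t^{-1}/\partial z|\le\Delta_\xi/4$, the mean value theorem gives $|x_i-x^\star|\le\frac{\Delta_\xi}{4}|\zeta_i|\to0$. This proves \eqref{eq:x_to_xstar_safe_main}, and \eqref{eq:consensus_safe_main} follows by the triangle inequality.
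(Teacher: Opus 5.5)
Your proposal is correct and follows essentially the same route as the paper's proof: the Lyapunov function $\tfrac12\|\boldsymbol{\zeta}\|^2+\tfrac12\|\boldsymbol{\varepsilon}\|^2$, the slab bound $b_i\le\bar b_{\mathcal{K}}$ for $|z_i|\le M_{\mathcal{K}}$ closed by a continuation argument, the chain of bounds on $\beta_i$, $d_i$, $1/b_i$, $\alpha_i$, $u_i$ and $\sigma_i$, completeness, and the mean-value estimate with constant $\Delta_\xi/4$. Your $\epsilon$-enlargement of the level $S_{\mathcal{K}}$ is valid because the gain condition is strict, and it closes the bootstrap more carefully than the paper's first-hitting-time wording.
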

\begin{proof}
Fix an arbitrary initial condition $(\mathbf{x}(0),\mathbf{u}(0)) \in \mathcal{K}$. Since $\mathcal{K} \subset \Omega(0)^{N} \times \mathcal{U}$, the barrier coordinates are well defined at $t=0$, and the closed-loop vector field is locally well defined. Let $[0,T_{\max})$ denote the corresponding maximal interval of existence. Define $T^{\star}
    :=
    \sup
    \left\{
        T \in (0,T_{\max}] :
        x_{i}(t) \in \Omega(t)
        \text{ for all } t \in [0,T),
        ~
        \forall i \in \mathcal{V}
    \right\}$. Clearly, $T^{\star}>0$. We first analyze the dynamics on $[0,T^{\star})$, where \Cref{prop:reduced_safe_dynamics} applies.

Consider the Lyapunov function candidate
\begin{align}
    W
    :=
    \frac{1}{2}\|\boldsymbol{\zeta}\|^{2}
    +
    \frac{1}{2}\|\boldsymbol{\varepsilon}\|^{2}. \label{eq:W_safe_main}
\end{align}
On $[0,T^{\star})$, \Cref{prop:reduced_safe_dynamics} gives
\begin{align}
    \dot{W}
    &=
    \boldsymbol{\zeta}^{\top}
    \left(
        -\left(k_{z}L+\kappa I_{N}\right)\boldsymbol{\zeta}
        +
        B(\mathbf{x},t)\boldsymbol{\varepsilon}
    \right)
    -
    \boldsymbol{\varepsilon}^{\top}C\boldsymbol{\varepsilon} \nonumber\\
    &=
    -k_{z}\boldsymbol{\zeta}^{\top}L\boldsymbol{\zeta}
    -
    \kappa\|\boldsymbol{\zeta}\|^{2}
    +
    \boldsymbol{\zeta}^{\top}B(\mathbf{x},t)\boldsymbol{\varepsilon}
    -
    \boldsymbol{\varepsilon}^{\top}C\boldsymbol{\varepsilon}. \label{eq:Wdot_safe_main_1}
\end{align}
We first derive a trajectory-independent bound on $B(\mathbf{x},t)$ over the slab $|z_{i}| \le M_{\mathcal{K}}$. From \eqref{eq:phi_inv_main}, if $|z_{i}| \le M_{\mathcal{K}}$, then
\begin{align}
    x_{i}-\underline{\xi}(t)
    =
    \frac{
        \left(\overline{\xi}(t)-\underline{\xi}(t)\right)e^{z_{i}}
    }{
        1 + e^{z_{i}}
    },
    ~~
    \overline{\xi}(t)-x_{i}
    =
    \frac{
        \overline{\xi}(t)-\underline{\xi}(t)
    }{
        1 + e^{z_{i}}
    }.
\end{align}
Substituting these expressions into \eqref{eq:b_i_main} yields
\begin{align}
    b_{i}(x_{i},t)
    =
    \frac{
        e^{-z_{i}} + 2 + e^{z_{i}}
    }{
        \overline{\xi}(t)-\underline{\xi}(t)
    }
    \le
    \frac{2(1+e^{M_{\mathcal{K}}})}{\delta_{\xi}}
    =
    \bar{b}_{\mathcal{K}}. \label{eq:b_bound_safe_main}
\end{align}
Hence, if $|z_{i}(t)| \le M_{\mathcal{K}}$ for all $i$, then $\|B(\mathbf{x}(t),t)\| \le \bar{b}_{\mathcal{K}}$. Now, on any subinterval of $[0,T^{\star})$ where $|z_{i}(t)| \le M_{\mathcal{K}}$ for all $i$, one has from \eqref{eq:Wdot_safe_main_1}
\begin{align}
    \dot{W}
    \le
    -\kappa\|\boldsymbol{\zeta}\|^{2}
    +
    \bar{b}_{\mathcal{K}}\|\boldsymbol{\zeta}\|\,\|\boldsymbol{\varepsilon}\|
    -
    c_{\min}\|\boldsymbol{\varepsilon}\|^{2}.
\end{align}
Applying Young's inequality,
\begin{align}
    \bar{b}_{\mathcal{K}}\|\boldsymbol{\zeta}\|\,\|\boldsymbol{\varepsilon}\|
    \le
    \frac{\kappa}{2}\|\boldsymbol{\zeta}\|^{2}
    +
    \frac{\bar{b}_{\mathcal{K}}^{2}}{2\kappa}\|\boldsymbol{\varepsilon}\|^{2},
\end{align}
so we obtain
\begin{align}
    \dot{W}
    \le
    -\frac{\kappa}{2}\|\boldsymbol{\zeta}\|^{2}
    -
    \left(
        c_{\min} - \frac{\bar{b}_{\mathcal{K}}^{2}}{2\kappa}
    \right)\|\boldsymbol{\varepsilon}\|^{2}.
\end{align}
By \eqref{eq:gain_condition_safe_main}, the constant $\eta_{\mathrm{s}}$ defined in \eqref{eq:eta_s_safe_main} is strictly positive, and therefore
\begin{align}
    \dot{W}
    \le
    -\eta_{\mathrm{s}}W. \label{eq:Wdot_safe_main_4}
\end{align}
Since $W(0) \le \frac{1}{2}S_{\mathcal{K}}^{2}$, a standard continuation argument shows that \eqref{eq:Wdot_safe_main_4} holds on the whole interval $[0,T^{\star})$. Indeed, if there existed a first time $t_{1} \in (0,T^{\star})$ such that $|z_{i}(t_{1})| > M_{\mathcal{K}}$ for some $i$, then by continuity there would be a first time at which $|z_{i}|=M_{\mathcal{K}}$. Integrating \eqref{eq:Wdot_safe_main_4} up to that time would give
\begin{align}
    W(t)
    \le
    e^{-\eta_{\mathrm{s}}t}W(0)
    \le
    W(0)
    \le
    \frac{1}{2}S_{\mathcal{K}}^{2},
    ~~
    \forall t \in [0,T^{\star}), \label{eq:W_exp_safe_main}
\end{align}
and hence
\begin{align}
    \|\boldsymbol{\zeta}(t)\|
    \le
    S_{\mathcal{K}},
    ~~
    \forall t \in [0,T^{\star}). \label{eq:zeta_bound_safe_main}
\end{align}
Therefore, using \eqref{eq:z_star_bounds_main}, $\forall i \in \mathcal{V}$,
\begin{align}
    |z_{i}(t)|
    \le
    |z^{\star}(t)| + \|\boldsymbol{\zeta}(t)\|
    \le
    \bar{z}_{\star}+S_{\mathcal{K}}
    =
    M_{\mathcal{K}},~
    \forall t \in [0,T^{\star}),
    \label{eq:z_bound_safe_main}
\end{align}
which is a contradiction. Thus \eqref{eq:W_exp_safe_main} and \eqref{eq:z_bound_safe_main} hold on all of $[0,T^{\star})$, and \eqref{eq:joint_decay_safe_main} follows immediately.

We next use \eqref{eq:z_bound_safe_main} to prove strict forward invariance of the safe set. From \eqref{eq:phi_inv_main}, and using \eqref{eq:common_safe_core_width_main} and \eqref{eq:z_bound_safe_main}, we obtain
\begin{align}
    x_{i}(t)-\underline{\xi}(t)
    \ge&
    \frac{\delta_{\xi}}{1+e^{M_{\mathcal{K}}}}
    =:
    m_{\mathcal{K}}
    >
    0,\\
    \overline{\xi}(t)-x_{i}(t)
    \ge&
    \frac{\delta_{\xi}}{1+e^{M_{\mathcal{K}}}}
    =
    m_{\mathcal{K}}
    >
    0,
\end{align}
for all $t \in [0,T^{\star})$ and all $i \in \mathcal{V}$. Hence, $\forall t \in [0,T^{\star})$, $\forall i \in \mathcal{V}$,
\begin{align}
    x_{i}(t)
    \in
    \left[
        \underline{\xi}(t)+m_{\mathcal{K}},
        \,
        \overline{\xi}(t)-m_{\mathcal{K}}
    \right]
    \subset
    \Omega(t). \label{eq:compact_strip_safe_main}
\end{align}
Thus, the solution cannot approach the boundary of $\Omega(t)$ in finite time, and therefore $T^{\star}=T_{\max}$.

We now bound the realized actuator inputs. First, from \eqref{eq:beta_i_main}, \eqref{eq:z_star_bounds_main}, and \eqref{eq:zeta_bound_safe_main},
\begin{align}
    |\beta_{i}(t)|
    \le&
    \bar{d}_{\star}
    +
    2k_{z}d_{i}\|\boldsymbol{\zeta}(t)\|
    +
    \kappa\|\boldsymbol{\zeta}(t)\|\nonumber\\
    \le&
    \bar{d}_{\star}
    +
    (2k_{z}d_{i}+\kappa)S_{\mathcal{K}}
    =
    \bar{\beta}_{i,\mathcal{K}}.
\end{align}
Also, from \eqref{eq:d_i_main}, \eqref{eq:common_safe_core_derivatives_main}, and \eqref{eq:compact_strip_safe_main},
\begin{align}
    |d_{i}(x_{i}(t),t)|
    \le&
    \frac{\bar{d}_{\xi}}{x_{i}(t)-\underline{\xi}(t)}
    +
    \frac{\bar{d}_{\xi}}{\overline{\xi}(t)-x_{i}(t)}\nonumber\\
    \le&
    \frac{2\bar{d}_{\xi}(1+e^{M_{\mathcal{K}}})}{\delta_{\xi}}
    =
    \bar{d}_{\mathcal{K}}.
\end{align}
Moreover, from \eqref{eq:phi_inv_main} and \eqref{eq:b_i_main},
\begin{align}
    \frac{1}{b_{i}(x_{i}(t),t)}
    =
    \frac{
        \left(\overline{\xi}(t)-\underline{\xi}(t)\right)e^{z_{i}(t)}
    }{
        \left(1+e^{z_{i}(t)}\right)^{2}
    }
    \le
    \frac{\Delta_{\xi}}{4}.
\end{align}
Hence, from \eqref{eq:alpha_i_safe_main},
\begin{align}
    |\alpha_{i}(t)|
    \le
    \frac{\Delta_{\xi}}{4}
    \left(
        \bar{\beta}_{i,\mathcal{K}} + \bar{d}_{\mathcal{K}}
    \right)
    =
    \bar{\alpha}_{i,\mathcal{K}}.
\end{align}
Since \eqref{eq:joint_decay_safe_main} implies $\|\boldsymbol{\varepsilon}(t)\| \le S_{\mathcal{K}}$, we conclude that
\begin{align}
    |u_{i}(t)|
    \le
    |\alpha_{i}(t)| + |\varepsilon_{i}(t)|
    \le
    \bar{\alpha}_{i,\mathcal{K}} + S_{\mathcal{K}}
    =
    r_{i,\mathcal{K}},
    ~
    \forall t \in [0,T_{\max}).
\end{align}
From \eqref{eq:interiority_condition_safe_main},
\begin{align}
    u_{i}(t)
    \in
    [-r_{i,\mathcal{K}},r_{i,\mathcal{K}}]
    \subset
    (\underline{u}_{i},\overline{u}_{i}),
    ~
    \forall t \in [0,T_{\max}),
    ~
    \forall i \in \mathcal{V}. \label{eq:u_inner_interval_safe_main}
\end{align}
Thus \eqref{eq:safety_and_admissibility_safe_main} holds on $[0,T_{\max})$, and
\begin{align}
    \sigma_{i}(u_{i}(t))
    \ge
    \underline{\sigma}_{i,\mathcal{K}}
    :=
    \min
    \left\{
        1 - \left(\frac{r_{i,\mathcal{K}}}{\overline{u}_{i}}\right)^{\gamma_{i}},
        \,
        1 - \left(\frac{r_{i,\mathcal{K}}}{-\underline{u}_{i}}\right)^{\gamma_{i}}
    \right\}
    >
    0. \label{eq:sigma_lower_safe_main}
\end{align}
It remains to establish boundedness of the commanded inputs and completeness. The quantities $\underline{\xi}(t)$, $\overline{\xi}(t)$, and their
derivatives up to second order are bounded by
\eqref{eq:common_safe_core_bounded} and
\eqref{eq:common_safe_core_derivatives_main}. Moreover, $z^{\star}(t)$ and its derivatives up to second order are bounded by \eqref{eq:z_star_bounds_main}. The estimate \eqref{eq:z_bound_safe_main} shows that $\mathbf{z}(t)$ remains in a compact slab, and therefore \eqref{eq:phi_inv_main} implies that $\mathbf{x}(t)$ evolves in the compact strip \eqref{eq:compact_strip_safe_main}. Since $\beta_{i}$ depends smoothly on $\mathbf{z}$, $z^{\star}$, and $\dot{z}^{\star}$, and since $\alpha_{i}$ is a smooth function of these quantities on the compact strip, it follows that $\dot{\alpha}_{i}$ is bounded on $[0,T_{\max})$. Combined with \eqref{eq:u_inner_interval_safe_main}, \eqref{eq:sigma_lower_safe_main}, and \eqref{eq:v_i_safe_main}, this yields boundedness of $v_{i}$ on $[0,T_{\max})$.

Hence, all closed-loop signals remain in a compact subset of the domain of the closed-loop vector field, so finite escape is impossible. Therefore, $T_{\max}=\infty$. The decay estimate \eqref{eq:joint_decay_safe_main} implies $\boldsymbol{\zeta}(t)\to \mathbf{0}$ and $\boldsymbol{\varepsilon}(t)\to \mathbf{0}$. Thus
\begin{align}
    z_{i}(t)-z^{\star}(t)
    \to
    0,
    ~~
    \forall i \in \mathcal{V}.
\end{align}
Furthermore, the inverse barrier map \eqref{eq:phi_inv_main} has derivative
\begin{align}
    \frac{\partial \varphi_{t}^{-1}}{\partial z}(z)
    =
    \frac{
        \left(\overline{\xi}(t)-\underline{\xi}(t)\right)e^{z}
    }{
        \left(1+e^{z}\right)^{2}
    }
    \le
    \frac{\Delta_{\xi}}{4},
\end{align}
for all $z \in \mathbb{R}$ and all $t \ge 0$. Since $x_{i}(t)=\varphi_{t}^{-1}(z_{i}(t))$ and $x^{\star}(t)=\varphi_{t}^{-1}(z^{\star}(t))$, the mean-value theorem gives
\begin{align}
    |x_{i}(t)-x^{\star}(t)|
    \le
    \frac{\Delta_{\xi}}{4}
    |z_{i}(t)-z^{\star}(t)|,
    ~~
    \forall t \ge 0.
\end{align}
Therefore, \eqref{eq:x_to_xstar_safe_main} holds, and \eqref{eq:consensus_safe_main} follows immediately.
\end{proof}

\begin{remark}\label{rem:safe_semiglobal_main}
\Cref{thm:safe_consensus_main} is semiglobal with respect to compact admissible initial sets. The conditions \eqref{eq:gain_condition_safe_main} and \eqref{eq:interiority_condition_safe_main} depend only on the design gains, the graph, the actuator bounds, the regularity constants of the common safe core, the boundedness constants of the prescribed transformed trajectory, and the chosen compact set of initial conditions. No closed-loop trajectory-dependent quantity is assumed a priori. As in \Cref{thm:input_consensus_main}, condition \eqref{eq:interiority_condition_input_main} uses a symmetric inner bound contained in the asymmetric actuator interval. Hence, it is sufficient but not necessary for strict actuator admissibility.
\end{remark}

\begin{remark}\label{rem:pinning_main}
The pinning term $-\kappa\left(z_{i}-z^{\star}(t)\right)$ in \eqref{eq:beta_i_main} removes the neutral consensus mode in the transformed coordinates, while the feedforward term $\dot{z}^{\star}(t)$ accounts for the motion of the prescribed admissible trajectory in the same coordinates. Consequently, the closed-loop system achieves a stronger objective than plain agreement, namely, synchronization to the prescribed admissible trajectory $x^{\star}(t)$.
\end{remark}
\begin{remark}
The proposed safe consensus framework differs from a direct funnel-control or prescribed-performance construction \cite{chen2020tcns_prescribed_performance,10794772} in three main ways: \emph{(i)} In their standard formulation, the associated transformation is therefore an error transformation, and the main objective is to keep the error inside a prescribed transient envelope. In the proposed design, the transformation used here is not applied to a consensus error. It is applied directly to the agent output relative to the admissible core interval \eqref{eq:zidef} so that boundedness of $z_i$ is equivalent to strict interiority of the physical output constraint $x_i(t) \in \Omega(t)$. \emph{(ii)} The proposed framework separates the admissible trajectory selection from the consensus mechanism by assigning the desired group behavior via a designer-selected admissible trajectory $x^{\star}(t)\in\Omega(t)$ and its transformed representation \eqref{eq:z_star_main}, so the closed-loop network synchronizes to a specified safe trajectory rather than to an unspecified agreement value. \emph{(iii)} Unlike funnel-control approaches that primarily encode output-error performance and may require a separate treatment of actuator feasibility, the present framework provides a unified distributed mechanism for strict asymmetric input admissibility, forward invariance of the output safe set, and synchronization to a prescribed safe trajectory.
\end{remark}
\begin{corollary}\label{cor:fixed_transformed_reference_main}
Suppose the hypotheses of \Cref{thm:safe_consensus_main} hold. If the prescribed admissible trajectory is generated by a constant transformed
coordinate $z_c^{\star}\in\mathbb{R}$ as
\begin{align}
    x^{\star}(t)
    =
    \varphi_t^{-1}(z_c^{\star})
    =
    \frac{
        \underline{\xi}(t)+\overline{\xi}(t)e^{z_c^{\star}}
    }{
        1+e^{z_c^{\star}}
    },
    ~~
    t\ge 0,
    \label{eq:fixed_transformed_reference}
\end{align}
then $z^{\star}(t)\equiv z_c^{\star}$ and
$\dot{z}^{\star}(t)\equiv 0$. In this case, the transformed
synchronization input reduces to
\begin{align}
    \beta_i
    =
    -
    k_z\sum_{j=1}^{N}a_{ij}(z_i-z_j)
    -
    \kappa(z_i-z_c^{\star}),
    \label{eq:beta_fixed_transformed_reference}
\end{align}
and the closed-loop system satisfies
\begin{align}
    \lim_{t\to\infty}
    \left(
        x_i(t)-x^{\star}(t)
    \right)
    =
    0,
    ~~
    \forall i\in\mathcal{V}.
\end{align}
If, in addition, the common safe core interval is time invariant, that is,
$\underline{\xi}(t)\equiv\underline{\xi}$ and
$\overline{\xi}(t)\equiv\overline{\xi}$, then
\begin{align}
    x^{\star}(t)
    \equiv
    x_c^{\star}
    :=
    \frac{
        \underline{\xi}+\overline{\xi}e^{z_c^{\star}}
    }{
        1+e^{z_c^{\star}}
    },
\end{align}
and hence
\begin{align}
    \lim_{t\to\infty}x_i(t)
    =
    x_c^{\star},
    ~~
    \forall i\in\mathcal{V}.
\end{align}
\end{corollary}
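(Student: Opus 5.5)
The corollary follows by specializing \Cref{thm:safe_consensus_main}; no new Lyapunov analysis is needed. The plan is to (i) check that the constructed $x^{\star}$ is a legitimate prescribed admissible trajectory, so the theorem applies; (ii) compute $z^{\star}$ and substitute it into \eqref{eq:beta_i_main}; and (iii) read off the limits from item 4 of the theorem.

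\textbf{Admissibility of $x^{\star}$.} By \Cref{lem:barrier_identities_main}, $\varphi_t^{-1}$ maps $\mathbb{R}$ into $\Omega(t)$ for every $t$. Hence $x^{\star}(t)=\varphi_t^{-1}(z_c^{\star})\in\Omega(t)\subseteq\mathcal{X}_i(t)$ for all $i$ and $t$. Because $\varphi_t$ is the inverse bijection, substituting into \eqref{eq:z_star_main} gives $z^{\star}(t)=\varphi_t(\varphi_t^{-1}(z_c^{\star}))=z_c^{\star}$. To see this directly: $x^{\star}-\underline{\xi}=(\overline{\xi}-\underline{\xi})e^{z_c^{\star}}/(1+e^{z_c^{\star}})$ and $\overline{\xi}-x^{\star}=(\overline{\xi}-\underline{\xi})/(1+e^{z_c^{\star}})$, so their ratio is $e^{z_c^{\star}}$. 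Therefore $z^{\star}$ is constant, hence $C^{2}$, with $\dot z^{\star}\equiv\ddot z^{\star}\equiv0$. The bounds \eqref{eq:z_star_bounds_main} then hold with $\bar z_{\star}=|z_c^{\star}|$ (or any larger positive constant) and arbitrary positive $\bar d_{\star},\bar s_{\star}$. The fact that $x^{\star}$ is $C^{2}$ is inherited from $\underline{\xi},\overline{\xi}\in C^{2}$, though only the regularity of $z^{\star}$ is actually used in the theorem.

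\textbf{Reduced law and convergence.} Substituting $\dot z^{\star}\equiv0$ and $z^{\star}\equiv z_c^{\star}$ into \eqref{eq:beta_i_main} yields \eqref{eq:beta_fixed_transformed_reference} directly. Since the hypotheses of \Cref{thm:safe_consensus_main} are assumed, and these constants are used in evaluating $M_{\mathcal{K}}$ and the gain and interiority conditions, item 4 of that theorem, \eqref{eq:x_to_xstar_safe_main}, gives $x_i(t)-x^{\star}(t)\to0$.

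\textbf{Time-invariant core.} If $\underline{\xi},\overline{\xi}$ are constant, the formula \eqref{eq:fixed_transformed_reference} has no time dependence, so $x^{\star}(t)\equiv x_c^{\star}$. It follows that $\lim_{t\to\infty}x_i(t)=x_c^{\star}$.

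\textbf{Main obstacle.} The only mildly delicate point is a bookkeeping one: the theorem's constants must be instantiated consistently, with $\bar z_{\star}\ge|z_c^{\star}|$ and $\bar d_{\star}$ taken as a small positive number since the assumption asks for positive constants. This requires confirming that "the hypotheses hold" is read with these choices. Nothing beyond direct substitution is required.
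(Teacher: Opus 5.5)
Your proposal is correct and follows essentially the same route as the paper: invert the barrier map to get $z^{\star}\equiv z_c^{\star}$ and $\dot z^{\star}\equiv 0$, substitute into $\beta_i$, invoke \eqref{eq:x_to_xstar_safe_main} from \Cref{thm:safe_consensus_main}, and note that a time-invariant core makes $x^{\star}$ constant. Your additional checks, that $x^{\star}(t)\in\Omega(t)$ and that the bounds on $z^{\star}$ hold with admissible positive constants, are careful bookkeeping that the paper leaves implicit.
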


\begin{proof}
If \eqref{eq:fixed_transformed_reference} holds, then applying
$\varphi_t$ to both sides gives $z^{\star}(t)\equiv z_c^{\star}$, and
therefore $\dot{z}^{\star}(t)\equiv 0$. Hence the transformed
synchronization input reduces to
\eqref{eq:beta_fixed_transformed_reference}. The convergence statement
follows directly from \Cref{thm:safe_consensus_main}. If
$\underline{\xi}$ and $\overline{\xi}$ are constant, then
\eqref{eq:fixed_transformed_reference} is constant, which gives the final
claim.
\end{proof}

\section{Simulation Results}\label{sec:simulations} 
\begin{figure*}[ht!]
    \centering
    \begin{subfigure}[t]{.325\linewidth}
        \centering
        \includegraphics[width=\linewidth]{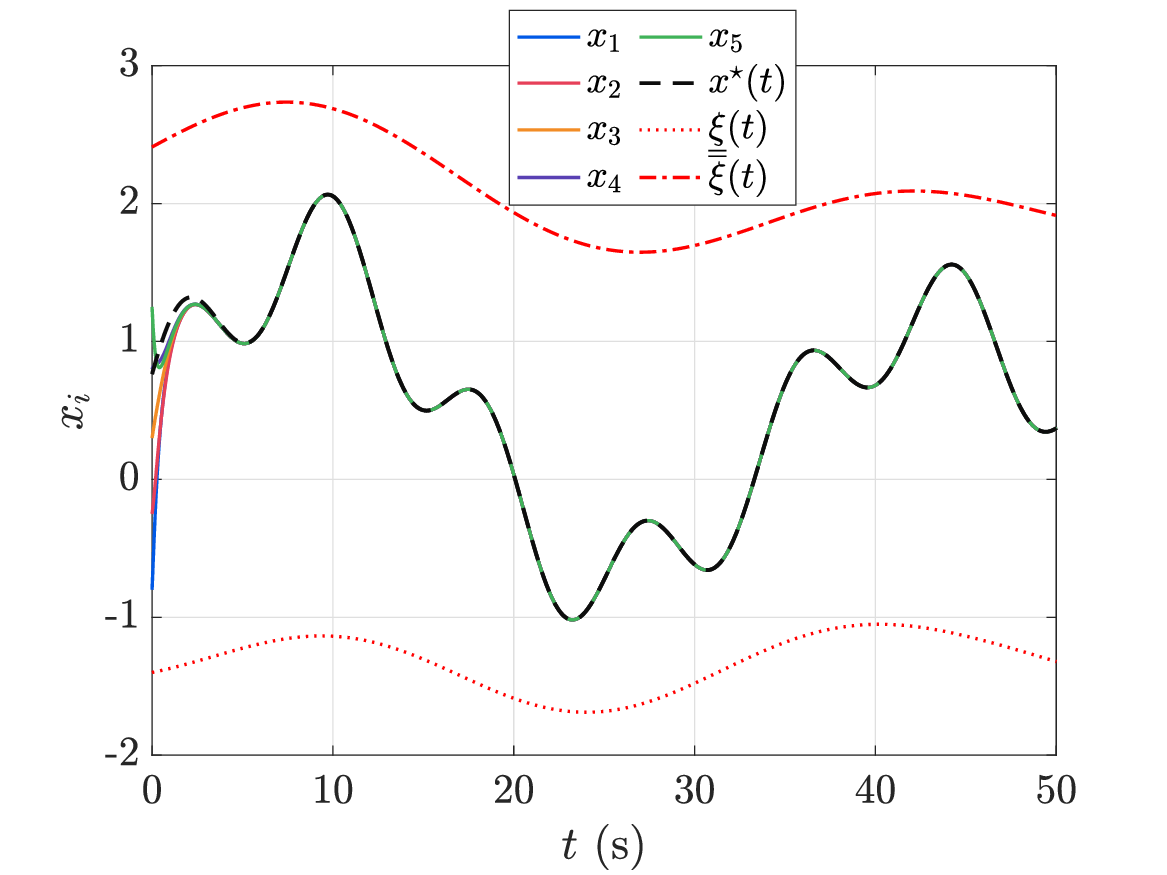}
    \caption{State trajectories $x_{i}(t)$.}
    \label{fig:xi_mf}
    \end{subfigure}
    \begin{subfigure}[t]{.325\linewidth}
        \centering
        \includegraphics[width=\linewidth]{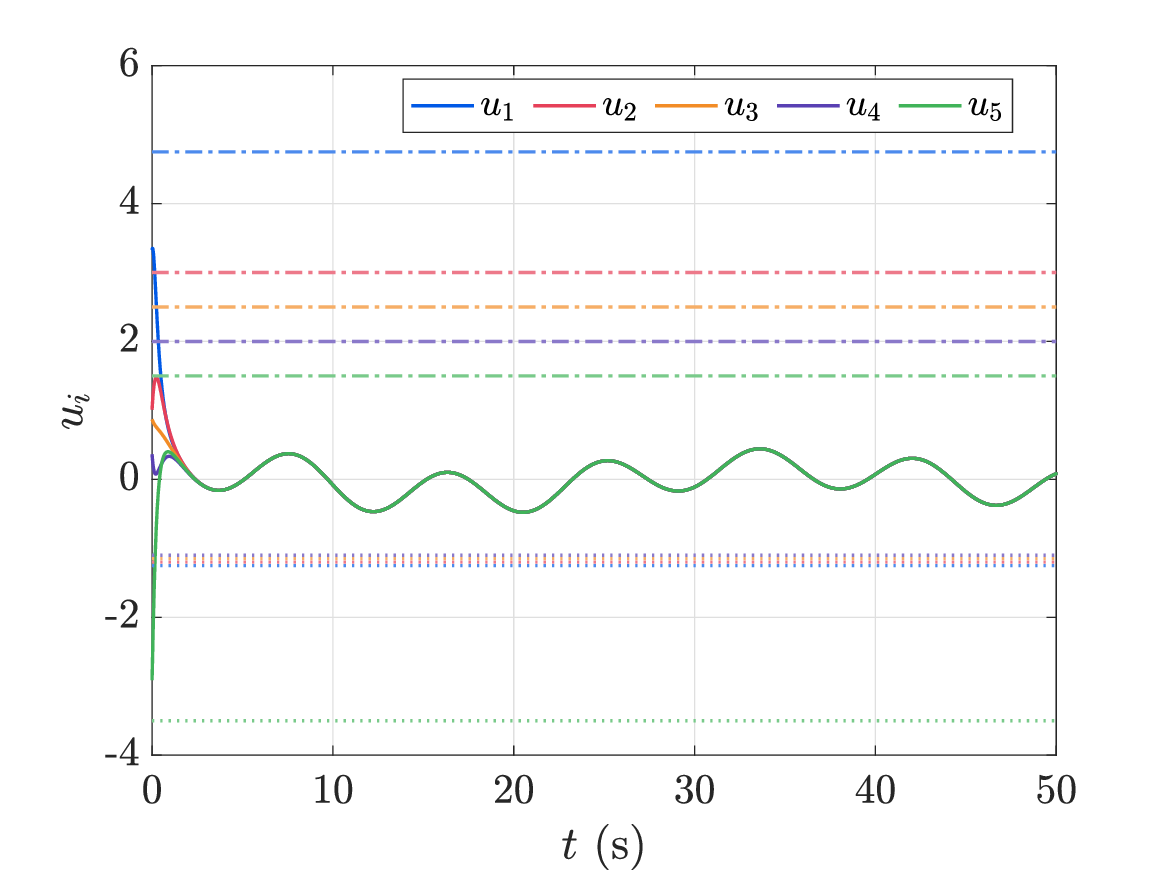}
    \caption{Realized actuator inputs $u_{i}(t)$.}
    \label{fig:ui_mf}
    \end{subfigure}
    \begin{subfigure}[t]{.325\linewidth}
        \centering
        \includegraphics[width=\linewidth]{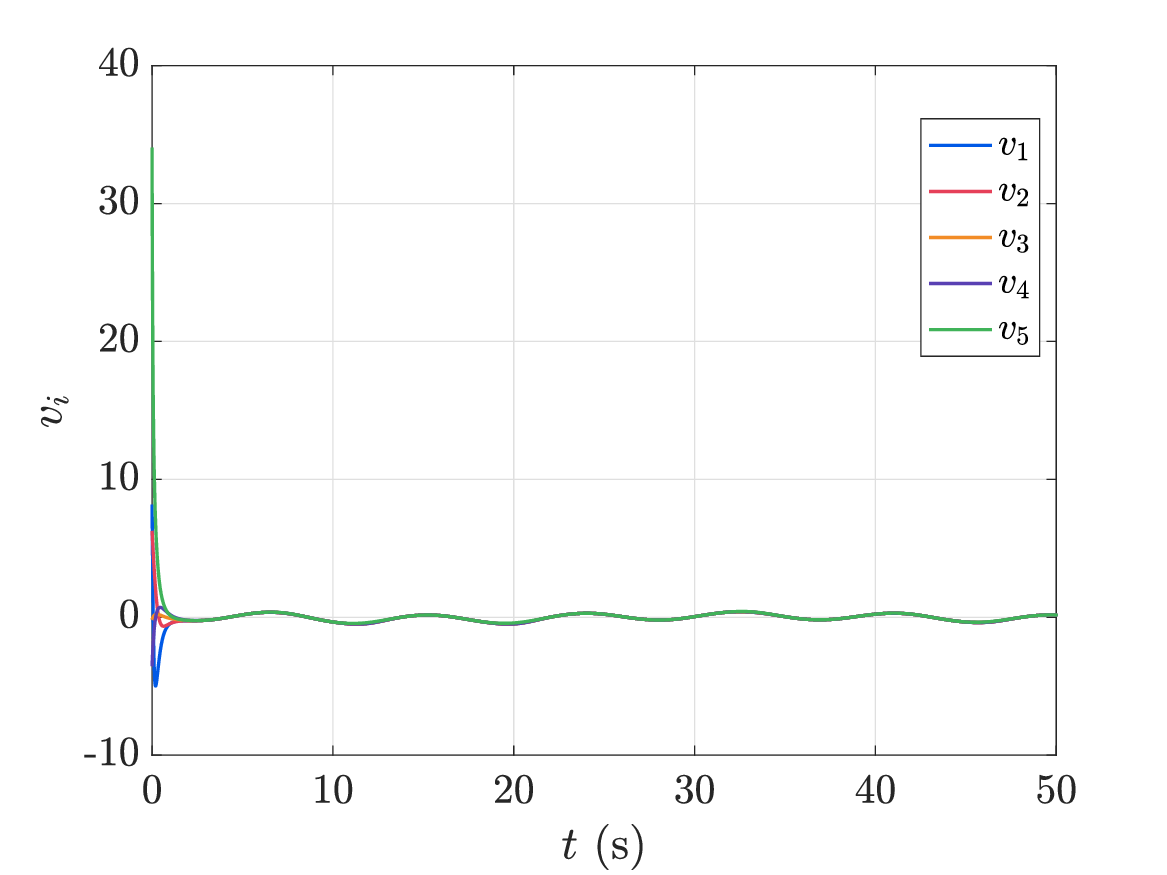}
    \caption{Commanded actuator signals $v_{i}(t)$.}
    \label{fig:vi_mf}
    \end{subfigure}
    \caption{Distributed safe-consensus for the multi-frequency prescribed reference case, generated via \eqref{eq:sim_rho_multifrequency}.}
\end{figure*}
\begin{figure*}[ht!]
    \centering
    \begin{subfigure}[t]{.325\linewidth}
        \centering
        \includegraphics[width=\linewidth]{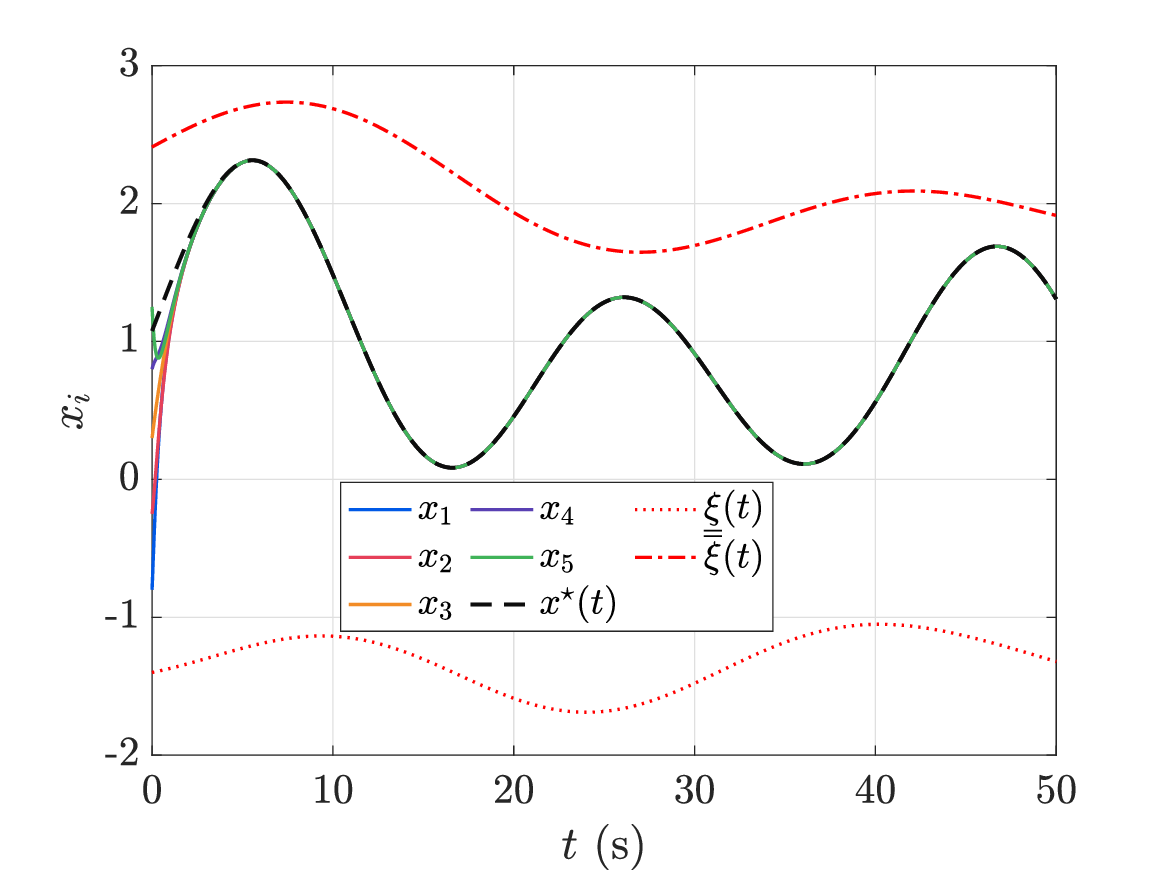}
    \caption{State trajectories $x_{i}(t)$.}
    \label{fig:xi_rs}
    \end{subfigure}
    \begin{subfigure}[t]{.325\linewidth}
        \centering
        \includegraphics[width=\linewidth]{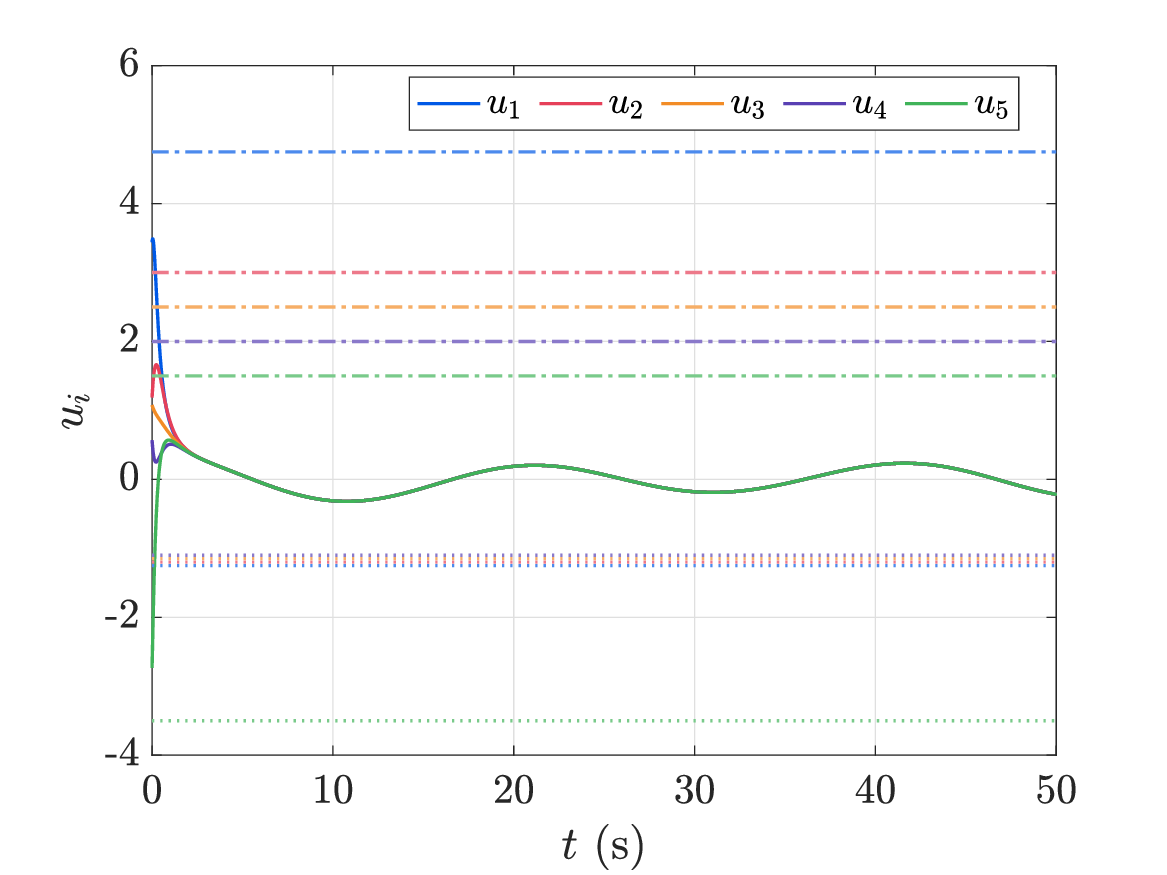}
    \caption{Realized actuator inputs $u_{i}(t)$.}
    \label{fig:ui_rs}
    \end{subfigure}
    \begin{subfigure}[t]{.325\linewidth}
        \centering
        \includegraphics[width=\linewidth]{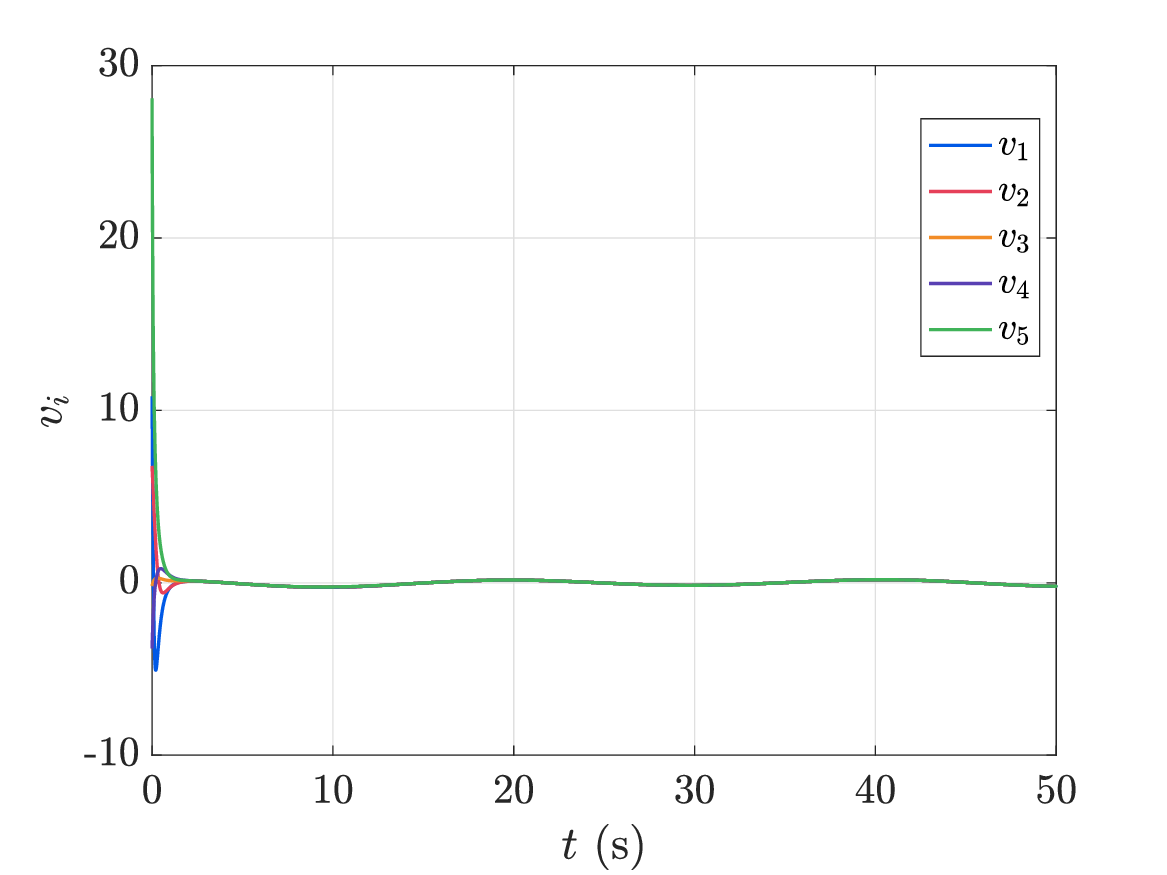}
    \caption{Commanded actuator signals $v_{i}(t)$.}
    \label{fig:vi_rs}
    \end{subfigure}
    \caption{Distributed safe-consensus for the biased sinusoidal prescribed reference case, generated via \eqref{eq:sim_rho_sine}.}
\end{figure*}
\begin{figure*}[ht!]
    \centering
    \begin{subfigure}[t]{.325\linewidth}
        \centering
        \includegraphics[width=\linewidth]{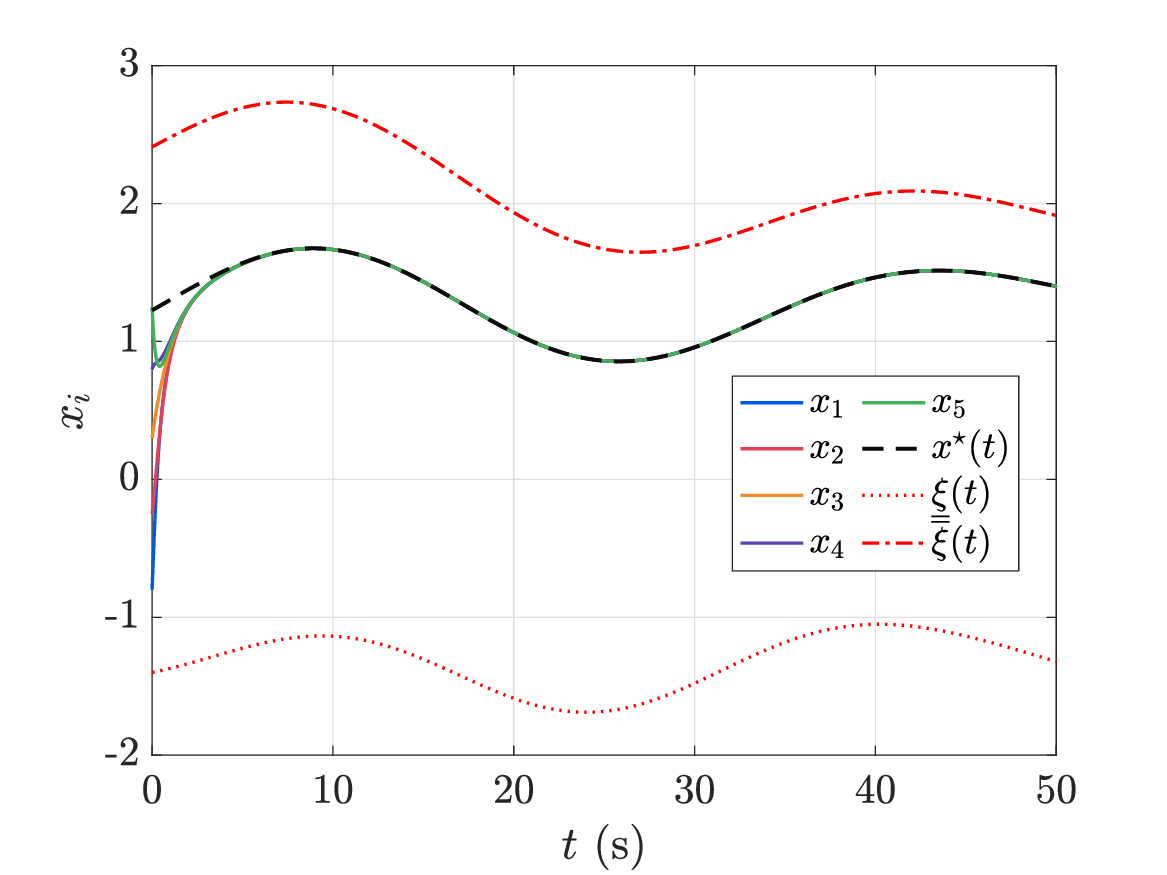}
    \caption{State trajectories $x_{i}(t)$.}
    \label{fig:xi_pr}
    \end{subfigure}
    \begin{subfigure}[t]{.325\linewidth}
        \centering
        \includegraphics[width=\linewidth]{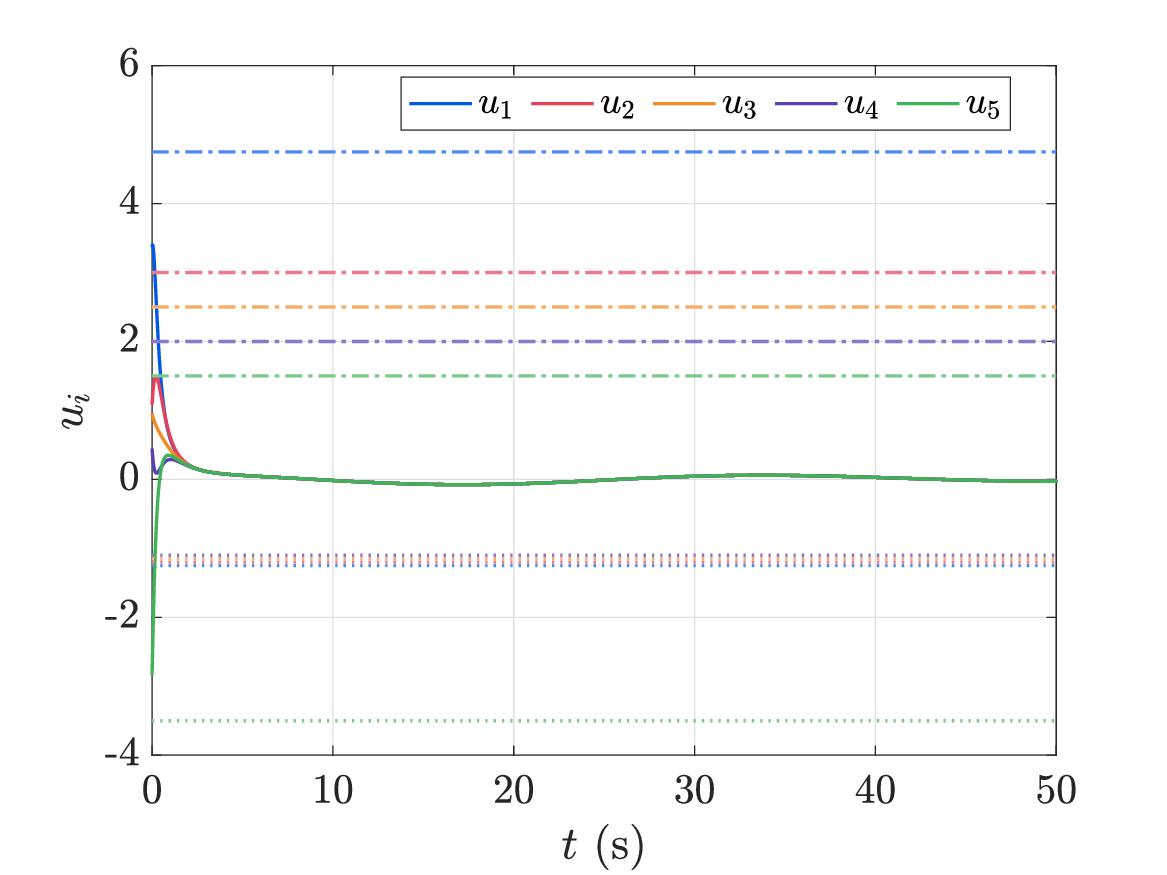}
    \caption{Realized actuator inputs $u_{i}(t)$.}
    \label{fig:ui_pr}
    \end{subfigure}
    \begin{subfigure}[t]{.325\linewidth}
        \centering
        \includegraphics[width=\linewidth]{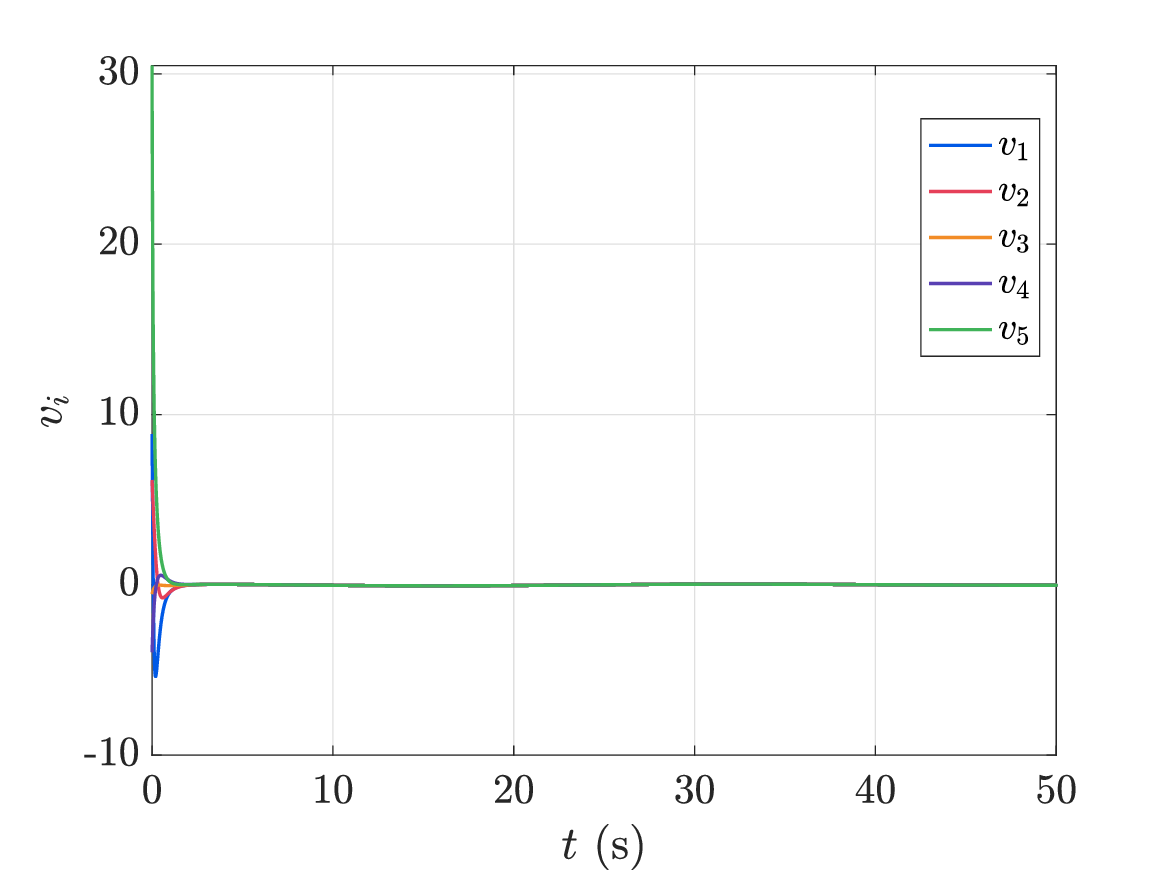}
    \caption{Commanded actuator signals $v_{i}(t)$.}
    \label{fig:vi_pr}
    \end{subfigure}
    \caption{Distributed safe-consensus for the smooth transition prescribed reference case, generated via \eqref{eq:sim_rho_piecewise}.}
\end{figure*}
\begin{figure*}[ht!]
    \centering
    \begin{subfigure}[t]{.325\linewidth}
        \centering
        \includegraphics[width=\linewidth]{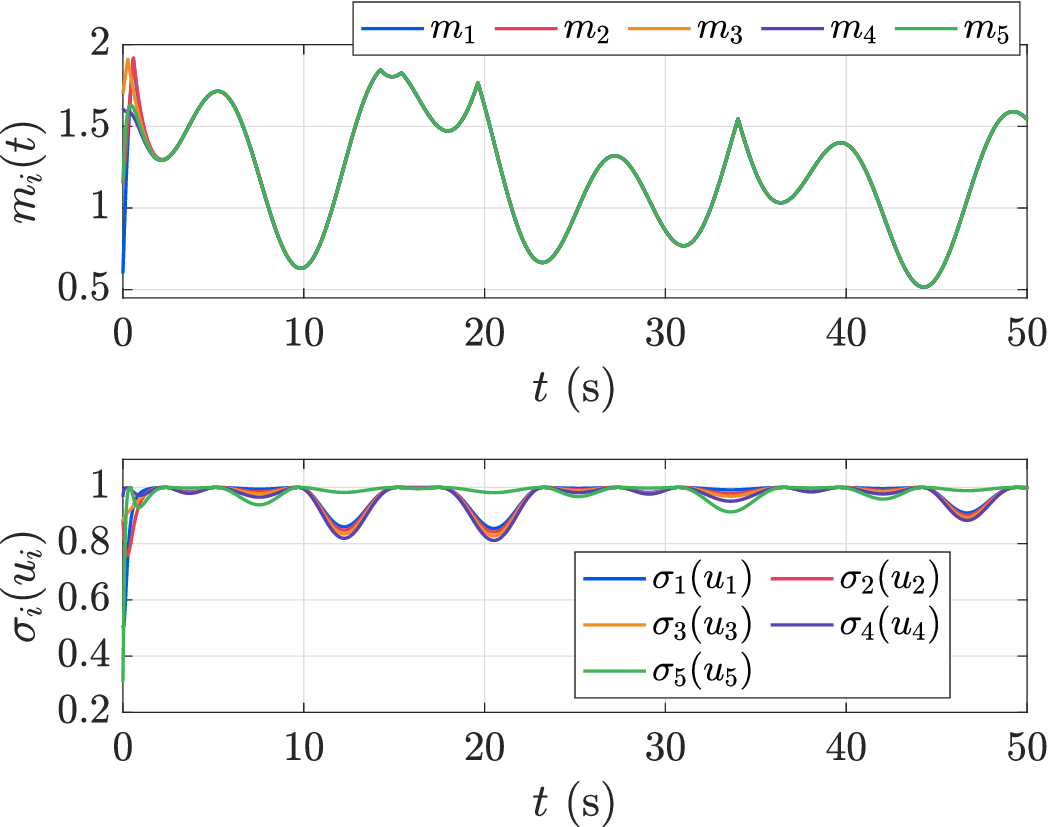}
    \caption{Multi-frequency prescribed reference.}
    \label{fig:sm_mf}
    \end{subfigure}
    \begin{subfigure}[t]{.325\linewidth}
        \centering
        \includegraphics[width=\linewidth]{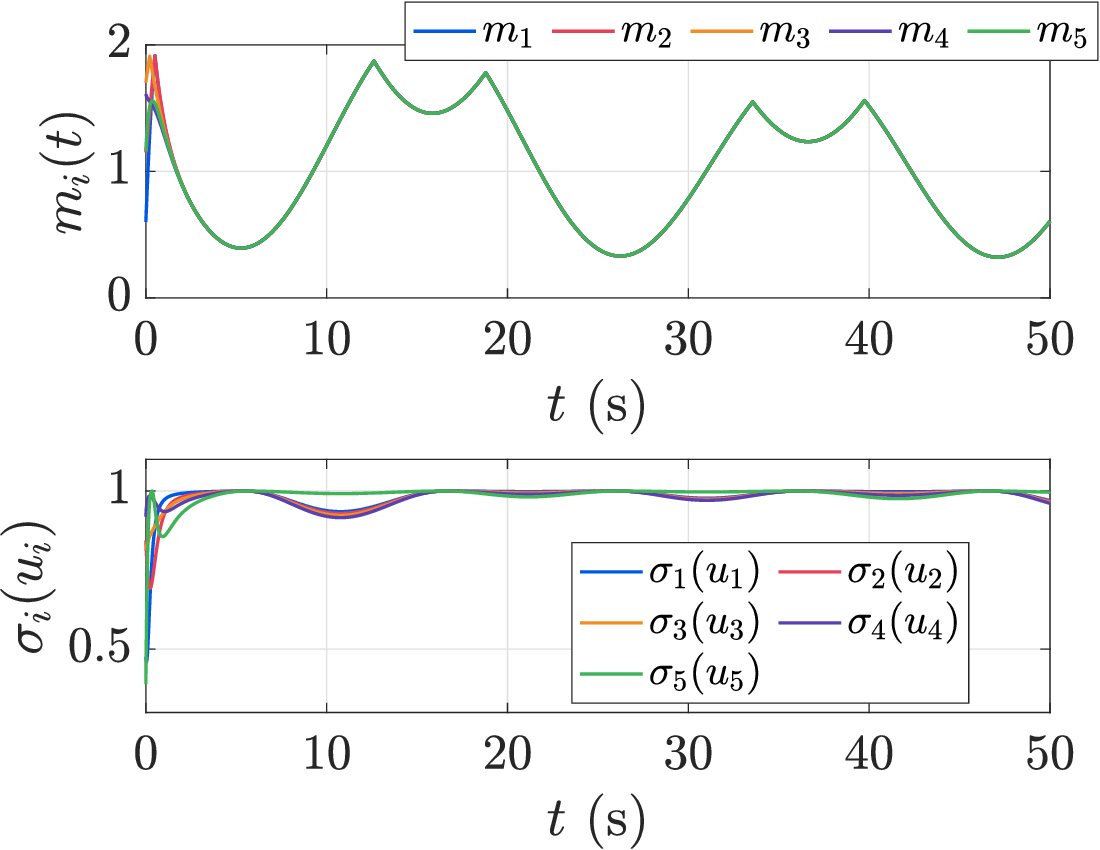}
    \caption{Biased sinusoidal prescribed reference.}
    \label{fig:sm_rs}
    \end{subfigure}
    \begin{subfigure}[t]{.325\linewidth}
        \centering
        \includegraphics[width=\linewidth]{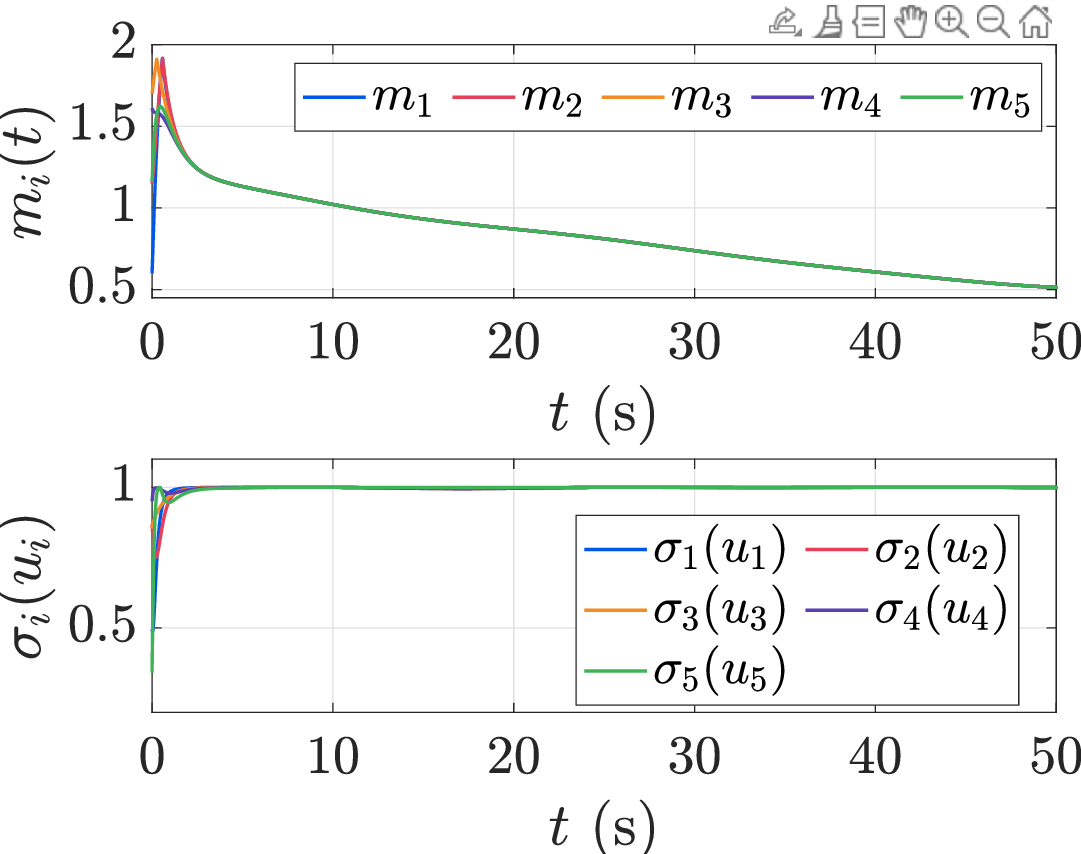}
    \caption{Smooth transition prescribed reference.}
    \label{fig:sm_pr}
    \end{subfigure}
    \caption{Safety margins and actuator regularity factors for the prescribed-trajectory synchronization problem.}
    \label{fig:sm}
\end{figure*}
\begin{figure*}[ht!]
    \centering
    \begin{subfigure}[t]{.325\linewidth}
        \centering
        \includegraphics[width=\linewidth]{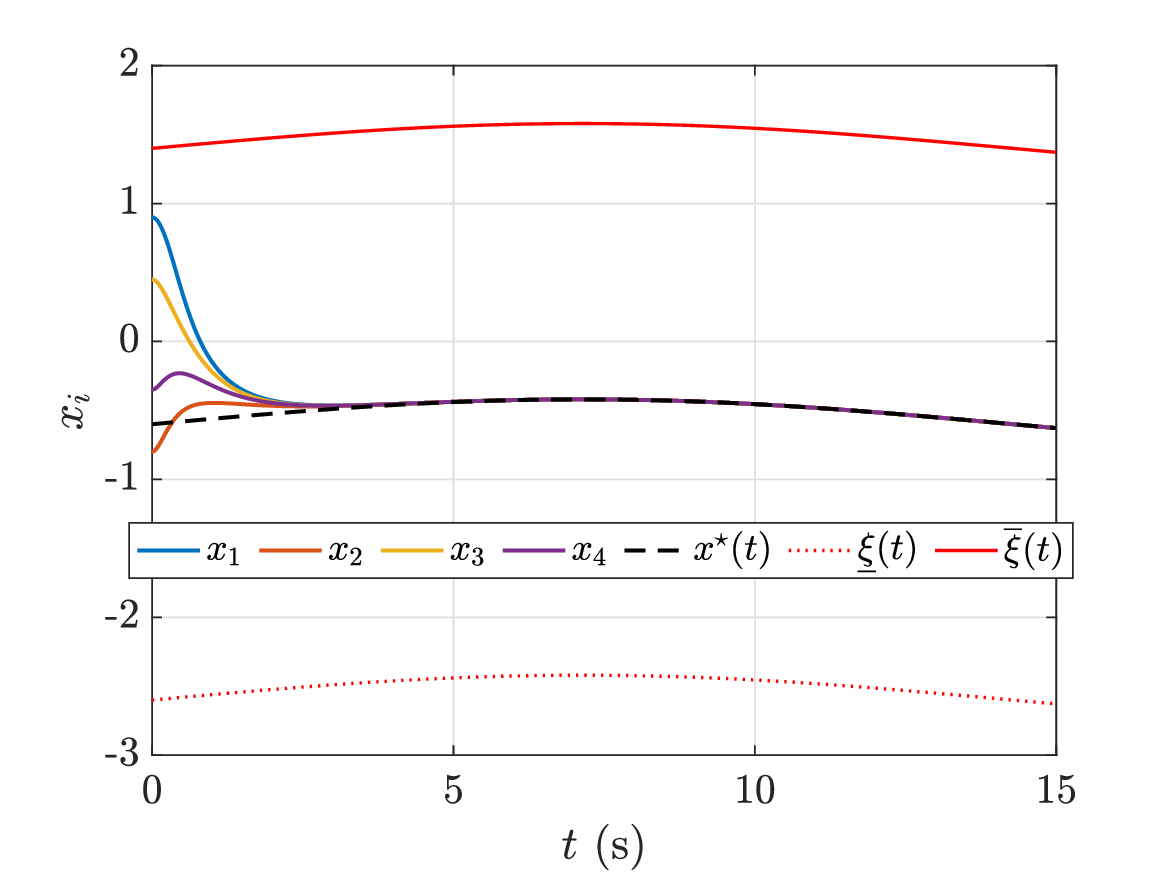}
    \caption{State trajectories $x_{i}(t)$.}
    \label{fig:tvsafe_states_short}
    \end{subfigure}
    \begin{subfigure}[t]{.325\linewidth}
        \centering
        \includegraphics[width=\linewidth]{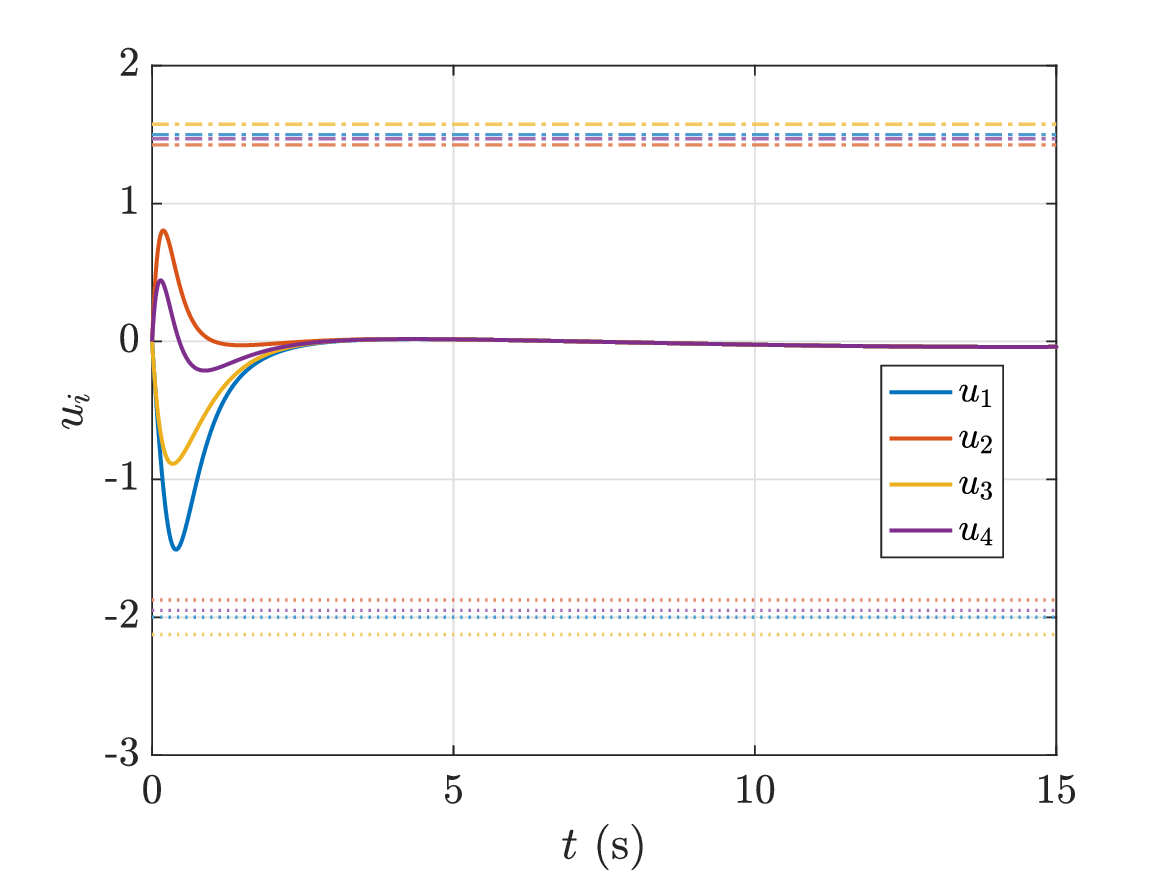}
    \caption{Realized actuator inputs $u_{i}(t)$.}
    \label{fig:tvsafe_inputs_short}
    \end{subfigure}
    \begin{subfigure}[t]{.325\linewidth}
        \centering
        \includegraphics[width=\linewidth]{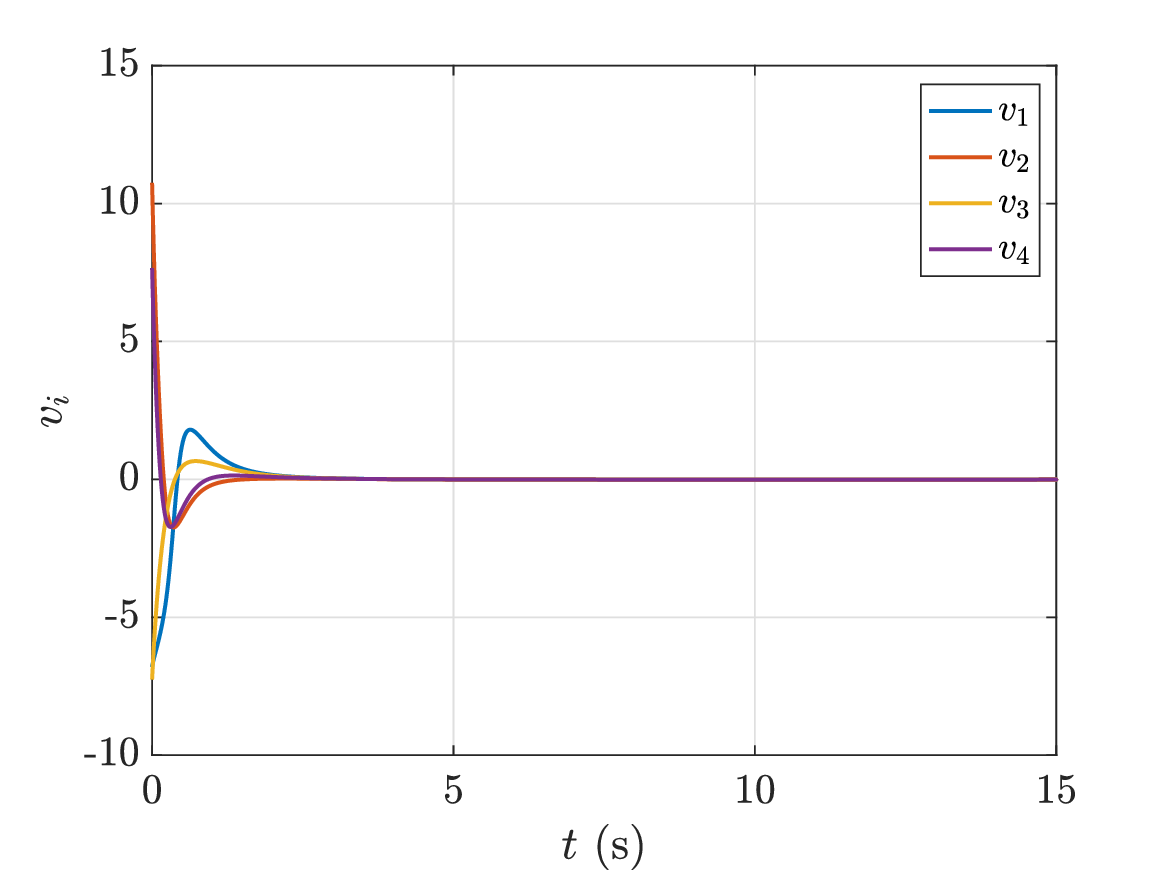}
    \caption{Commanded actuator signals $v_{i}(t)$.}
    \label{fig:tvsafe_command_short}
    \end{subfigure}
    \caption{Distributed safe-consensus for the constant barrier coordinate reference.}
\end{figure*}

We now illustrate the proposed prescribed-trajectory safe-consensus strategy for a network of $N=5$ single-integrator agents communicating over a connected undirected cycle graph. The controller gains are selected as $k_{z}=1.20$, $\kappa=1.00$, and $c_{i}=3.00$. The actuator parameters are chosen as $p_{1,i}=1$, $p_{2,i}=0.70$, and $\gamma_{i}=2$ for all $i\in \mathcal{V}$. To demonstrate heterogeneous and asymmetric actuator authority, the admissible actuator intervals are selected as $\underline{\mathbf{u}}
    =
    \begin{bmatrix}
        -1.25 & -1.20 & -1.15 & -1.10 & -3.50
    \end{bmatrix}^{\top}$ and $\overline{\mathbf{u}}
    =
    \begin{bmatrix}
        4.75 & 3.00 & 2.50 & 2.00 & 1.50
    \end{bmatrix}^{\top}$. Thus, each agent has a different admissible input interval, and the positive and negative actuation limits are not symmetric. In the plots, upper bounds are in dash-dot, whereas lower bounds are in dots. The common admissible core \eqref{eq:Omega} is selected as a time-varying asymmetric interval, where
\begin{align}
    \underline{\xi}(t)
    =
    c_{\xi}(t)-\Delta_{\ell}(t),
    ~~
    \overline{\xi}(t)
    =
    c_{\xi}(t)+\Delta_{u}(t),
    \label{eq:sim_core_bounds}
\end{align}
with
\begin{align}
    c_{\xi}(t)
    =&
    0.25\sin{\left(0.18t\right)}
    +
    0.10\cos{\left(0.07t\right)},\\
    \Delta_{\ell}(t)
    =&
    1.35
    +
    0.18\sin{\left(0.11t+0.40\right)}
    +
    0.08\cos{\left(0.27t\right)},\\
    \Delta_{u}(t)
    =&
    2.10
    +
    0.22\cos{\left(0.09t-0.30\right)}
    +
    0.10\sin{\left(0.21t\right)}.
\end{align}
This choice gives different lower and upper clearances from the nominal signal $c_{\xi}(t)$. In particular, $\Delta_{\ell}(t)\ge 1.09$ and $\Delta_{u}(t)\ge 1.78$, so the common core has a uniformly positive width. The prescribed admissible trajectory is generated through a normalized location variable $\rho(t)$ inside the moving interval. Specifically, we set
\begin{align}
    x^{\star}(t)
    =
    \underline{\xi}(t)
    +
    \rho(t)
    \left(
        \overline{\xi}(t)-\underline{\xi}(t)
    \right),
    \label{eq:sim_xstar}
\end{align}
where $\rho(t)\in(0,1)$ specifies the relative location of $x^{\star}(t)$ within the admissible core \eqref{eq:Omega}. This construction is useful because admissibility is enforced directly through the scalar signal $\rho(t)$. If there exist constants $\rho_{\min}$ and $\rho_{\max}$ such that $0
    <
    \rho_{\min}
    \le
    \rho(t)
    \le
    \rho_{\max}
    <
    1$ for all $t\ge 0$, then $x^{\star}(t)\in\Omega(t)$ for all $t\ge 0$. Moreover,
\begin{align}
    x^{\star}(t)-\underline{\xi}(t)
    =&
    \rho(t)
    \left(
        \overline{\xi}(t)-\underline{\xi}(t)
    \right),\\
    \overline{\xi}(t)-x^{\star}(t)
    =&
    \left(
        1-\rho(t)
    \right)
    \left(
        \overline{\xi}(t)-\underline{\xi}(t)
    \right).
\end{align}
Thus, using \eqref{eq:common_safe_core_width_main}, the prescribed trajectory remains uniformly separated from the two moving boundaries according to
\begin{align}
    \min
    \left\{
        x^{\star}(t)-\underline{\xi}(t),
        \overline{\xi}(t)-x^{\star}(t)
    \right\}
    \ge
    \min
    \left\{
        \rho_{\min},
        1-\rho_{\max}
    \right\}
    \delta_{\xi}.
    \label{eq:sim_xstar_margin}
\end{align}
This allows one to test different reference patterns while preserving the structural admissibility required by the theory.

Three representative choices of $\rho(t)$ are considered. The first one is a multi-frequency reference,
\begin{align}
    \rho_{\mathrm{mf}}(t)
    =
    0.50
    +
    0.22\sin{\left(0.18t\right)}
    +
    0.12\sin{\left(0.73t+0.60\right)}.
    \label{eq:sim_rho_multifrequency}
\end{align}
The signal \eqref{eq:sim_rho_multifrequency} combines a slow component and a faster oscillatory component. It is chosen to test whether the controller can track a nontrivial prescribed trajectory that does not move monotonically and contains multiple time scales. Since $0.50-0.22-0.12
    \le
    \rho_{\mathrm{mf}}(t)
    \le
    0.50+0.22+0.12$, one has $0.16
    \le
    \rho_{\mathrm{mf}}(t)
    \le
    0.84,~\forall t\ge 0$. Therefore, the prescribed trajectory generated by \eqref{eq:sim_rho_multifrequency} remains strictly inside $\Omega(t)$ with a normalized boundary clearance of at least $0.16$.

The second choice is a single-frequency biased sinusoidal reference,
\begin{align}
    \rho_{\mathrm{s}}(t)
    =
    0.65
    +
    0.25\sin{\left(0.30t\right)}.
    \label{eq:sim_rho_sine}
\end{align}
This case provides a simpler periodic trajectory and is useful for isolating the effect of smooth oscillatory motion without additional high-frequency content. The offset $0.65$ biases the trajectory toward the upper portion of the admissible interval. This is intentional in the present simulation because the actuator limits are heterogeneous and asymmetric, with more restrictive negative authority for some agents. Since $0.40
    \le
    \rho_{\mathrm{s}}(t)
    \le
    0.90,
    ~
    \forall t\ge 0$, the corresponding trajectory remains admissible, with a normalized boundary clearance of at least $0.10$.

The third choice is a smooth transition-type reference,
\begin{align}
    \rho_{\mathrm{p}}(t)
    =
    0.75
    +
    0.08\tanh{\left(0.05\left(t-20\right)\right)}
    +
    0.02\sin{\left(0.15t\right)}.
    \label{eq:sim_rho_piecewise}
\end{align}
This case is included to emulate a commanded relocation of the desired operating point inside the safe corridor. The hyperbolic tangent term produces a smooth transition centered around $t=20$, while the sinusoidal term adds a small persistent variation after and during the transition. Since $-1
    <
    \tanh{\left(0.05\left(t-20\right)\right)}
    <
    1$, it follows that $0.75-0.08-0.02
    <
    \rho_{\mathrm{p}}(t)
    <
    0.75+0.08+0.02$, and hence $0.65
    <
    \rho_{\mathrm{p}}(t)
    <
    0.85,
    ~
    \forall t\ge 0$. Thus, this reference remains well inside the safe interval while testing the transient response to a deliberate shift in the prescribed admissible trajectory.

For each of the above cases, the physical prescribed trajectory $x^{\star}(t)$ is obtained from \eqref{eq:sim_xstar}, and the corresponding transformed reference is computed using \eqref{eq:z_star_main}. Equivalently, since $x^{\star}(t)$ is parameterized by $\rho(t)$, one may write
\begin{align}
    z^{\star}(t)
    =
    \ln{
    \left(
        \frac{\rho(t)}
        {1-\rho(t)}
    \right)}.
    \label{eq:sim_zstar_rho}
\end{align}
The expression in \eqref{eq:sim_zstar_rho} shows that the normalized reference location $\rho(t)$ determines the transformed reference used by the controller. The three cases above therefore test the same safe-synchronization mechanism under progressively different reference profiles (multi-frequency tracking, biased sinusoidal tracking, and smooth transition tracking).

The initial states and initial actuator-tracking errors are selected as $\mathbf{x}(0)
    =
    \begin{bmatrix}
        -0.80 & -0.25 & 0.30 & 0.80 & 1.25
    \end{bmatrix}^{\top}$ and 
    $\boldsymbol{\varepsilon}(0)
    =
    \begin{bmatrix}
        0.05 & -0.08 & 0.04 & -0.06 & 0.03
    \end{bmatrix}^{\top}$.

\Cref{fig:xi_mf,fig:xi_rs,fig:xi_pr} show the evolution of the agent states together with the corresponding user-selected prescribed trajectory $x^{\star}(t)$ and the asymmetric admissible bounds $\underline{\xi}(t)$ and $\overline{\xi}(t)$ for three different choices of $\rho(t)$. One may observe that all agent states remain strictly inside the time-varying safe core throughout the simulation and converge to the prescribed admissible trajectory. This also confirms that the agents do reach an agreement with each other and synchronize with the user-defined safe trajectory.

The realized actuator inputs are shown in \Cref{fig:ui_mf,fig:ui_rs,fig:ui_pr}. Despite the heterogeneous and asymmetric input intervals, the realized inputs remain strictly inside their corresponding admissible sets. The safety and actuator-regularity plots further confirm the forward-invariance and nonsingularity properties of the proposed closed-loop system. The state-safety margin $m_i(t)
    :=
    \min
    \left\{
        x_i(t)-\underline{\xi}(t),
        \overline{\xi}(t)-x_i(t)
    \right\}$ remains strictly positive for every agent (see \Cref{fig:sm}), which shows that no trajectory approaches the boundary of the moving admissible core $\Omega(t)$. Hence, the logarithmic barrier coordinates remain well defined over the entire simulation horizon. Similarly, the actuator regularization factor $\sigma_i(u_i(t))$ stays positive for all agents, which implies that the realized inputs remain strictly inside their heterogeneous asymmetric admissible intervals. Since the commanded input $v_i$ contains $\sigma_i(u_i)$ in the denominator, the positivity of $\sigma_i(u_i(t))$ verifies that the actuator-compensating control law remains nonsingular. Consequently, the plots of $m_i(t)$ and $\sigma_i(u_i(t))$ provide a numerical confirmation that both state safety and actuator admissibility are preserved while the agents synchronize to the prescribed admissible trajectory.

These results confirm the three main conclusions of the proposed theory: forward invariance of the prescribed safe set, strict admissibility of the actuator inputs, and asymptotic synchronization to $x^{\star}(t)$. Moreover, the simulation highlights two important aspects of the proposed design. First, the prescribed trajectory is not required to be constant or centered inside the safe interval. It may vary with multiple frequencies as long as it remains strictly admissible. Second, the admissible actuator intervals may be both asymmetric and agent-dependent. The barrier-coordinate transformation enforces state safety, while the transformed pinning term drives all agents to the prescribed safe trajectory in the transformed coordinate system.

As a special case, we now consider a network of $N=4$ agents communicating over a connected undirected cycle graph, and elucidate the results in \Cref{cor:fixed_transformed_reference_main}. The barrier-consensus gains are chosen as $k_{z} = 1.4, \kappa = 1.2$, and the actuator-tracking gains are selected as $c =
    \begin{bmatrix}
        2 & 2.1 & 2.2 & 2.3
    \end{bmatrix}^{\top}$. For the continuously differentiable asymmetric actuator dynamics, we choose $p_{1,i} = 1, p_{2,i} = 0.25,  \gamma_{i} = 4$ (for all $i$) with asymmetric actuator limits $\underline{\mathbf{u}}
    =
    \begin{bmatrix}
        -2.0 & -1.875 & -2.125 & -1.95
    \end{bmatrix}^{\top}$, and $\overline{\mathbf{u}}
    =
    \begin{bmatrix}
        1.5 & 1.425 & 1.575 & 1.47
    \end{bmatrix}^{\top}$. The common time-varying safe interval is chosen as $\underline{\xi}(t) = -2.6 + 0.18\sin(0.22t)$, $\overline{\xi}(t) = \phantom{-}1.4 + 0.18\sin(0.22t)$, so that $\overline{\xi}(t)-\underline{\xi}(t)=4.0~\forall t \ge 0$. The desired barrier-coordinate equilibrium is fixed at $z^{\star}=0$, which yields the admissible synchronized reference $x^{\star}(t)
    =
    \frac{
        \underline{\xi}(t)+\overline{\xi}(t)
    }{2}$. The initial conditions are taken as $x(0)
    =
    \begin{bmatrix}
        0.9 & -0.8 & 0.45 & -0.35
    \end{bmatrix}^{\top}$, $u(0)
    =
    \begin{bmatrix}
        0 & 0 & 0 & 0
    \end{bmatrix}^{\top}$, which lie strictly inside the admissible state and actuator sets.  \Cref{fig:tvsafe_states_short} shows the state trajectories together with the time-varying safety bounds and the reference $x^{\star}(t)$. All states remain strictly inside the admissible interval and converge to the moving safe trajectory. \Cref{fig:tvsafe_inputs_short} displays the actuator outputs and confirms strict satisfaction of the asymmetric bounds. \Cref{fig:tvsafe_command_short} depicts the actual commanded actuator signals. The simulation confirms that the proposed controller simultaneously enforces forward invariance of the time-varying safe set, strict satisfaction of asymmetric actuator constraints, and asymptotic synchronization to the admissible reference trajectory.

\section{Conclusions}\label{sec:conclusion}
This paper investigated safe distributed consensus for single-integrator multi-agent systems over connected undirected graphs under simultaneous asymmetric actuator constraints and time-varying state safety constraints. The proposed framework yields a closed-loop architecture that simultaneously accounts for both actuation limits and safe-state requirements. For compact admissible sets of initial conditions, we established completeness of solutions, boundedness of all closed-loop signals, strict satisfaction of asymmetric input bounds, forward invariance of the prescribed safe interval, exponential convergence of the actuator-tracking errors, and asymptotic synchronization of the agent states to a designer-selected admissible trajectory embedded in the common safe set. The present results provide initial steps to advance safety-critical networked autonomy via concurrent network-level coordination and actuator-level realization.

\bibliographystyle{ieeetr}
\bibliography{references}

\begin{thebibliography}{10}

\bibitem{OlfatiSaber2007}
R.~Olfati-Saber, J.~A. Fax, and R.~M. Murray, ``Consensus and cooperation in
  networked multi-agent systems,'' {\em Proceedings of the IEEE}, vol.~95,
  no.~1, pp.~215--233, 2007.

\bibitem{OlfatiSaberMurray2004}
R.~Olfati-Saber and R.~M. Murray, ``Consensus problems in networks of agents
  with switching topology and time-delays,'' {\em IEEE Transactions on
  Automatic Control}, vol.~49, no.~9, pp.~1520--1533, 2004.

\bibitem{RenBeard2005}
W.~Ren and R.~W. Beard, ``Consensus seeking in multi-agent systems under
  dynamically changing interaction topologies,'' {\em IEEE Transactions on
  Automatic Control}, vol.~50, no.~5, pp.~655--661, 2005.

\bibitem{mesbahi2010graph}
M.~Mesbahi and M.~Egerstedt, {\em Graph Theoretic Methods in Multiagent
  Networks}.
\newblock Princeton University Press, 2010.

\bibitem{9072294}
S.-H. Kwon, Y.-B. Bae, J.~Liu, and H.-S. Ahn, ``From matrix-weighted consensus
  to multipartite average consensus,'' {\em IEEE Transactions on Control of
  Network Systems}, vol.~7, no.~4, pp.~1609--1620, 2020.

\bibitem{Ames2019}
A.~D. Ames, S.~Coogan, M.~Egerstedt, G.~Notomista, K.~Sreenath, and P.~Tabuada,
  ``Control barrier functions: Theory and applications,'' {\em 2019 18th
  European Control Conference (ECC)}, pp.~3420--3431, 2019.

\bibitem{Wang2017}
L.~Wang, A.~D. Ames, and M.~Egerstedt, ``Safety barrier certificates for
  collisions-free multirobot systems,'' {\em IEEE Transactions on Robotics},
  vol.~33, no.~3, pp.~661--674, 2017.

\bibitem{actuator_constraints_1}
R.~Rajamani, {\em Vehicle Dynamics and Control}.
\newblock Springer, 2~ed., 2012.

\bibitem{actuator_constraints_2}
T.~I. Fossen, {\em Handbook of Marine Craft Hydrodynamics and Motion Control}.
\newblock John Wiley \& Sons, 2~ed., 2021.

\bibitem{actuator_constraints_3}
B.~L. Stevens, F.~L. Lewis, and E.~N. Johnson, {\em Aircraft Control and
  Simulation: Dynamics, Controls Design, and Autonomous Systems}.
\newblock John Wiley \& Sons, 3~ed., 2016.

\bibitem{Tee2009}
K.~P. Tee, S.~S. Ge, and E.~H. Tay, ``Barrier lyapunov functions for the
  control of output-constrained nonlinear systems,'' {\em Automatica}, vol.~45,
  no.~4, pp.~918--927, 2009.

\bibitem{safe_coordination_2}
X.~Sun and C.~G. Cassandras, ``Optimal dynamic formation control of multi-agent
  systems in constrained environments,'' {\em Automatica}, vol.~73,
  pp.~169--179, 2016.

\bibitem{ranjan2026engagement}
P.~K. Ranjan, A.~Sinha, and Y.~Cao, ``Engagement-zone-aware input-constrained
  guidance for safe target interception in contested environments,'' {\em arXiv
  preprint arXiv:2603.23649}, 2026.

\bibitem{9149651}
Y.-H. Liu, C.-Y. Su, and H.~Li, ``Adaptive output feedback funnel control of
  uncertain nonlinear systems with arbitrary relative degree,'' {\em IEEE
  Transactions on Automatic Control}, vol.~66, no.~6, pp.~2854--2860, 2021.

\bibitem{zhang2025tcns_state_constraints}
Z.~Zhang, Z.~Chen, and W.~Fang, ``Cooperative fault tolerant control with state
  constraints for nonlinear multiple-agent systems,'' {\em IEEE Transactions on
  Control of Network Systems}, vol.~12, no.~2, pp.~1264--1276, 2025.

\bibitem{8950207}
Y.~Lv, J.~Fu, G.~Wen, T.~Huang, and X.~Yu, ``On consensus of multiagent systems
  with input saturation: Fully distributed adaptive antiwindup protocol design
  approach,'' {\em IEEE Transactions on Control of Network Systems}, vol.~7,
  no.~3, pp.~1127--1139, 2020.

\bibitem{10795196}
Y.~Xu and F.~Jabbari, ``Discrete-time leader-following multiagent systems:
  Saturation constraints and event-triggered control,'' {\em IEEE Transactions
  on Control of Network Systems}, vol.~12, no.~2, pp.~1354--1368, 2025.

\bibitem{Yang2014}
T.~Yang, Z.~Meng, D.~V. Dimarogonas, and K.~H. Johansson, ``Global consensus
  for discrete-time multi-agent systems with input saturation constraints,''
  {\em Automatica}, vol.~50, no.~2, pp.~499--506, 2014.

\bibitem{Yi2019}
X.~Yi, T.~Yang, J.~Wu, and K.~H. Johansson, ``Distributed event-triggered
  control for global consensus of multi-agent systems with input saturation,''
  {\em Automatica}, vol.~100, pp.~1--9, 2019.

\bibitem{zhang2023neural}
S.~Zhang, K.~Garg, and C.~Fan, ``Neural graph control barrier functions guided
  distributed collision-avoidance multi-agent control,'' in {\em 7th Annual
  Conference on Robot Learning}, 2023.

\bibitem{SAJJADIKIA2013762}
S.~Sajjadi-Kia and F.~Jabbari, ``Controllers for linear systems with bounded
  actuators: Slab scheduling and anti-windup,'' {\em Automatica}, vol.~49,
  no.~3, pp.~762--769, 2013.

\bibitem{10982103}
X.~Min, S.~Baldi, Y.~Shi, and W.~Yu, ``Safe control of multiagent systems via
  low-complexity control barrier functions,'' {\em IEEE Transactions on
  Automatic Control}, vol.~70, no.~10, pp.~6831--6845, 2025.

\bibitem{sharifi2023tcns_hocbf_leader_follower}
M.~Sharifi and D.~V. Dimarogonas, ``Higher order barrier certificates for
  leader-follower multi-agent systems,'' {\em IEEE Transactions on Control of
  Network Systems}, vol.~10, no.~2, pp.~900--911, 2023.

\bibitem{song2025tcns_safety_flocking}
Y.~Song, N.~P. Nguyen, H.-S. Park, Y.~You, M.~Lee, and H.~Oh, ``Distributed
  safety-critical optimal flocking control algorithm with feasibility
  enhancement of high-order control barrier function,'' {\em IEEE Transactions
  on Control of Network Systems}, vol.~12, no.~3, pp.~2052--2063, 2025.

\bibitem{10794772}
M.~Charitidou and D.~V. Dimarogonas, ``Virtual leader and distance-based
  formation control with funnel constraints,'' {\em IEEE Transactions on
  Control of Network Systems}, vol.~12, no.~2, pp.~1342--1353, 2025.

\bibitem{10556645}
P.~Mestres, C.~Nieto-Granda, and J.~Cortés, ``Distributed safe navigation of
  multi-agent systems using control barrier function-based controllers,'' {\em
  IEEE Robotics and Automation Letters}, vol.~9, no.~7, pp.~6760--6767, 2024.

\bibitem{11499442}
X.~Huang and L.~Long, ``Distributed asymptotic consensus safety-critical
  control of multi-agent systems with vector control barrier functions,'' {\em
  IEEE Transactions on Control of Network Systems}, pp.~1--10, 2026.

\bibitem{Kumar2025}
S.~Kumar, S.~R. Kumar, and A.~Sinha, ``Provably safe control for constrained
  nonlinear systems with bounded input,'' {\em arXiv preprint
  arXiv:2504.11592}, 2025.

\bibitem{chen2020tcns_prescribed_performance}
F.~Chen and D.~V. Dimarogonas, ``Leader--follower formation control with
  prescribed performance guarantees,'' {\em IEEE Transactions on Control of
  Network Systems}, vol.~8, no.~1, pp.~450--461, 2021.

\end{thebibliography}

\end{document}